\title{The Complexity of Reasoning for Fragments of Autoepistemic Logic\thanks{Supported in part by DFG grant VO 630/6-2.}}
\author{Nadia Creignou\inst{1} \and Arne Meier\inst{2} \and Michael Thomas\inst{2} \and Heribert Vollmer\inst{2}}
\institute{
  Laboratoire d'Informatique Fondamentale,  CNRS,  Aix-Marseille Universit\'{e}\\
  163, avenue de Luminy, 13288 Marseille, France\\
  \protect\url|creignou@lif.univ-mrs.fr|
\and
  Institut f\"ur Theoretische~Informatik, Gottfried Wilhelm Leibniz Universit\"{a}t\\
  Appelstr.~4, 30167~Hannover, Germany\\
  \protect\url|{meier,thomas,vollmer}@thi.uni-hannover.de|
}
\begin{document}

\maketitle

\begin{abstract}
Autoepistemic logic extends propositional logic by the modal operator $L$. A formula $\varphi$ that is preceded by an $L$ is said to be ``believed''. The logic was introduced by Moore 1985 for modeling an ideally rational agent's behavior and reasoning about his own beliefs. In this paper we analyze all Boolean fragments 
of autoepistemic logic
with respect to the computational complexity of
the three most common decision problems
expansion existence, brave reasoning and cautious reasoning.
As a second contribution we classify the computational complexity of counting the number of stable expansions of a given knowledge base.
To the best of our knowledge this is the first paper analyzing the counting problem for autoepistemic logic.
\end{abstract}

\section{Introduction} \label{sect:intro}

  Non-monotonic logics are among the most important calculi in the area of knowledge representation and reasoning.
  Autoepistemic logic, introduced 1985 by Moore~\cite{moore:85}, is one of the most prominent non-monotonic logic.
  It was originally created to overcome difficulties present in the non-monotonic modal logics proposed by McDermott and Doyle~\cite{mcdermott-doyle:80}, but was also shown to embed several other non-monotonic formalisms such as Reiter's 
  default logic~\cite{reiter:80} or McCarthy's circumscription~\cite{mccarthy:80}.
  
  Autoepistemic logic extends classical logic with a unary modal operator $L$ expressing the beliefs of an ideally rational agent.
  The sentence $L\varphi$ means that the agent can derive $\varphi$ based on its knowledge. 
  To formally capture the set of beliefs of an agent, the notion of \emph{stable expansions} was introduced.
  Stable expansions are defined as the fixed points of an operator deriving the logical consequences of the 
  agent's knowledge and belief. A given knowledge base may admit no or several such stable expansions. 
  Hence, the following questions naturally arise. 
  Given a knowledge base $\Sigma$, does $\Sigma$ admits a stable expansion at all?
  And given a knowledge base $\Sigma$ and a formula $\varphi$, is $\varphi$ contained in at least one (resp.\ all) 
  stable expansion of $\Sigma$.
  
  While all these problems are undecidable for first-order autoepistemic logic, they
 are situated at the second level of the polynomial hierarchy in the propositional case~\cite{gottlob:92}; and thus harder to solve
  than the classical satisfiability or implication problem unless the polynomial hierarchy collapses below its second level.
  This increase in complexity raises the question for the sources of the hardness on the one hand, 
  and for tractable restrictions on the other hand.
  
  In this paper, we study the computational complexity of the three decision problems mentioned above for fragments of autoepistemic logic, given by restricting the propositional part, \emph{i.e.}, by restricting the set of allowed Boolean connectives. 
We bound the complexity of all three reasoning tasks for all finite sets of allowed Boolean functions. 
  This approach has first been taken by Lewis, who showed that the satisfiability problem for (pure) propositional logic is $\NP$-complete if
  the negation of the implication $(x \not\limplies y)$ can be composed from the set of available Boolean connectives, and is polynomial-time solvable in all other cases. Since then, this approach has been applied to a wide range of problems including
  equivalence and implication problems~\cite{rei03,bemethvo08imp}, 
  satisfiability and model checking in modal and temporal logics~\cite{bhss05b,bsssv07,bamuscscscvo09,memuthvo08},
  default logic~\cite{bemethvo08}, 
  circumscription~\cite{thomas09} and abduction \cite{crscth10}.
  
  Our goal is to exhibit fragments of lower complexity which might lead to better algorithms for cases in which the 
  set of Boolean connectives can be restricted. Furthermore we aim to understand the sources of hardness and to provide an understanding which connectives take the role of $(x \not\limplies y)$ 
  in the context of autoepistemic logic.

  Though at first sight, an infinite number of sets $B$ of allowed propositional connectives has to be examined, we prove, making use of results from universal algebra, that essentially only seven cases can occur: (1) $B$ can express all Boolean connectives, (2) $B$ can express all monotone Boolean connectives, (3) $B$ can express all linear connectives, (4) $B$ is equivalent to $\{\lor\}$, (5) $B$ is equivalent to $\{\land\}$, (6) $B$ is equivalent to $\{\neg\}$,  (7) $B$ is empty.
  We first show, extending Gottlob's results,  that the above problems are complete for a 
  class from the second level of the polynomial hierarchy for the cases (1) and (2). 
  In case (4) the complexity drops to completeness for a class from the first level of the hierarchy,
  whereas for (3) the problem becomes solvable in polynomial time while being hard for $\parityL$.
  Finally, for the cases (5) to (7) it even drops down to solvability in logspace.
  
  Besides the decision variant, another natural question is concerned with the number of stable expansions. 
  This refers to the so called \emph{counting problem} for stable expansions.
  Recently, counting problems have gained quite a lot of attention in non-monotonic logics.
  For circumscription, the counting problem (that is, determining the number of minimal models of a propositional formula) has been studied in \cite{dhk05,duhe08}.
  For propositional abduction, a different non-monotonic logic, some complexity results  for the counting problem (that is, computing the number of so called ``solutions'' of a propositional abduction problem) were presented in \cite{hepi07,crscth10}. Algorithms based on bounded treewidth have been proposed in \cite{JPRW08} for the counting problems in abduction and circumscription.
  Here, we consider the complexity of the problem to count the number of stable expansions for a given knowledge base. To the best of our knowledge, this problem is addressed here for the first time. We show that 
  it is $\SharpCoNP$-complete in cases (1) and (2) from the above, 
  drops to $\SharpP$-completeness for the case (4), 
  and is polynomial-time computable in cases (3) and (5) to (7).
  
  The rest of this paper is organized is follows. 
  Sections~\ref{sect:prelim} and~\ref{sect:ae_logic} contain preliminaries and the formal definition of autoepistemic logic.
  In Section~\ref{sect:complexity_results} we classify the complexity of the decision problems mentioned above for all finite 
  sets of allowed Boolean functions. 
  Section~\ref{sect:counting} contains the classification of the problem to
count the number of stable expansions. 
  Finally, Section~\ref{sect:conclusion} concludes with a discussion of the results.

\section{Preliminaries} \label{sect:prelim}

  We use standard notions of complexity theory.
  For decision problem, the arising complexity degrees encompass the classes $\L$,
  $\P$, $\NP$, $\co\NP$, $\SigmaPtwo$ and $\PiPtwo$.
  For more background information, the reader is referred to~\cite{pap94}.
  We furthermore require the class $\parityL$ defined as the class of languages $L$ 
  such that there exists a nondeterministic logspace Turing machine that exhibits an 
  odd number of accepting paths iff $x\in L$ for all $x$~\cite{budaheme92}. 
  It is known that $\L \subseteq \parityL \subseteq \P$.
  Regarding hardness proofs of decision problems, 
  we consider \emph{logspace many-one reductions}, defined as follows:
  a language $A$ is logspace many-one reducible to some language $B$ (written $A \leqlogm B$) if 
  there exists a logspace-computable function $f$ such that $x \in A \iff f(x) \in B$.
  
  In the context of counting problems,
  denote by $\FP$ the set of all functions computable in polynomial time, and for an arbitrary complexity class $\mathcal{C}$,  
  define $\#\mathord{\cdot}\mathcal{C}$ as the class the functions $f$ for which there exists a set $A\in\mathcal{C}$ (the \emph{witness set} for $f$) such that there exists a polynomial $p$ such that for all $(x,y)\in A$, $|y| \leq\bigl|p(x)\bigr|$, and $f(x)=\bigl|\{˛ y \mid (x,y) \in A\}\bigr|$, 
  see \cite{hevo95}.  
  In particular, we make use of the classes $\SharpP=\#\mathord{\cdot}\P$ and $\SharpCoNP$.
  To obtain hardness results for counting problems, we will employ \emph{parsimonious reductions} defined as follows:  
A counting function $f$ parsimoniously reduces to function $h$ if there is a function $g\in\FP$ such that for all $x$, $f(x)=h\bigl(g(x)\bigr)$. Note the analogy to the simple $m$-reductions for decision problems defined above.

  We moreover assume familiarity with propositional logic. 
  As we are going to consider problems parameterized by the set of  Boolean connectives,
  we require some algebraic tools to classify the complexity of the infinitely many arising problems. 
  A \emph{clone} is a set $B$ of Boolean functions that is closed under superposition, 
  \emph{i.e.}, $B$ contains all projections and is closed under arbitrary compositions (see \cite[Chapter~1]{pip97b} or \cite{bcrv03}).
  For a set $B$ of Boolean functions, 
  we denote by $[B]$ the smallest clone containing $B$ and call $B$ a \emph{base} for $[B]$.
  Post classified the lattice of all clones and found a finite base for each clone~\cite{pos41}.  
  A list of all Boolean clones together with a basis for each of them can be found, \emph{e.g.}, in \cite{bcrv03}. 
  In order to introduce the clones relevant to this paper, say that an $n$-ary Boolean functions $f$
  is \emph{monotone} if $a_1 \leq b_1, a_2 \leq b_2, \ldots , a_n \leq b_n$ implies $f(a_1, \ldots , a_n) \leq f(b_1, \ldots , b_n)$, and that $f$ is \emph{linear} if $f \equiv x_1 \xor \cdots \xor x_n \xor c$ for a constant $c \in \{0, 1\}$ and variables $x_1,\ldots, x_n$. 
  The clones relevant to this paper together with their bases are listed in Table~\ref{tab:clones}. 

  \begin{table*}[tb]
    \centering
    \begin{tabular}{c|l|l}
      Name & Definition & Base \\
      \hline
      $\CloneBF$ & All Boolean functions & $\{\land, \neg\}$ \\
      $\CloneM$ & $\{f : f \text{ is monotone}\}$ & $\{\lor, \land, \false, \true\}$ \\
      $\CloneL$ & $\{f : f \text{ is linear}\}$ & $\{ \xor,\true\}$ \\
      $\CloneV$ & $\{f : f \equiv c_0 \lor \bigvee_{i=1}^n c_ix_i \text{ where the $c_i$s are constant}\}$ & $\{ \lor, \false,\true \}$ \\
      $\CloneE$ & $\{f : f \equiv c_0 \land \bigwedge_{i=1}^n c_ix_i \text{ where the $c_i$s are constant}\}$ & $\{ \land, \false, \true \}$ \\
      $\CloneN$ & $\{f : f \text{ depends on at most one variable}\}$ & $\{ \neg,\false,\true\}$ \\
      $\CloneI$ & $\{f : f \text{ is a projection or a constant}\}$ & $\{\id, \false,\true\}$ \\
    \end{tabular}
    \caption{
      \label{tab:clones}
      A list of Boolean clones with definitions and bases.
    }
  \end{table*}

\section{Autoepistemic Logic} \label{sect:ae_logic}

  Autoepistemic logic extends propositional logic by a modal operator $L$ stating that its argument is ``believed''.
  Syntactically, the set of autoepistemic formulae $\allAutoepistemicFormulae$ is defined via 
  $
    \varphi ::= p \mid f(\varphi,\ldots,\varphi) \mid L \varphi,
  $
  where $f$ is a Boolean function and $p$ is a proposition. 
  The consequence relation $\models$ of the underlying propositional logic is extended to $\allAutoepistemicFormulae$ by simply 
  treating $L\varphi$ as an atomic formula.
  An \emph{(autoepistemic) $B$-formula} is an autoepistemic formula using only 
  functions from a finite set $B$ of Boolean functions as connectives.
  The set of all autoepistemic $B$-formulae will be denoted by $\allAutoepistemicFormulae(B)$.
  
  Let $B$ be any finite set of Boolean functions.
  For $\Sigma \subseteq \allAutoepistemicFormulae(B)$, we write $\theorems{\Sigma}$ for the deductive closure of $\Sigma$, 
  \emph{i.e.}, $\theorems{\Sigma} := \{\varphi \mid \Sigma \models \varphi\}$.
  For $\varphi \in \allAutoepistemicFormulae(B)$, let $\SF(\varphi)$ be the set of its subformulae and 
  let $\SF^L(\varphi):=\{L\psi \mid L\psi \in \SF(\varphi)\}$ be the set of its $L$-prefixed subformulae.
  The above notions are canonically extended to sets of formulae.
  
  The key notion in autoepistemic logic are stable sets of beliefs grounded on the given premises. 
  These sets, called \emph{stable expansions}, are defined as the fixed points of the consequences of knowledge and belief.

  \begin{definition}
    Let $\Sigma \subseteq \allAutoepistemicFormulae(B)$. A set $\Delta\subseteq \allAutoepistemicFormulae$ is a \emph{stable expansion} of $\Sigma$ if it satisfies the condition
    $
      \Delta = \theorems{\Sigma \cup L(\Delta) \cup \neg L(\bar{\Delta})},
    $
    where $L(\Delta) := \{L\varphi \mid \varphi \in \Delta\}$ and $\neg L(\bar{\Delta}) := \{\neg L\varphi \mid \varphi \not\in \Delta\}$.
  \end{definition}
  
  
  The three main decision problems in the context of autoepistemic logic are
  deciding whether a given set of premises has a stable expansion, and 
  deciding whether a given formula in contained in at least one (resp.\ all) stable expansion.
  As we are to study the complexity of these problems for finite restricted sets $B$ of Boolean functions, we formally define 
  the \emph{expansion existence problem} as 
  \problemdef{$\EXP(B)$}
    {A set $\Sigma \subseteq \allAutoepistemicFormulae(B)$}
    {Does $\Sigma$ have a stable expansion?}
  and the \emph{brave (resp.\ cautious) reasoning problems} as 
  \problemdef{$\MEMB(B)$ (resp. $\MEMC(B)$)}
    {A set $\Sigma \subseteq \allAutoepistemicFormulae(B)$, a formula $\varphi \in \allAutoepistemicFormulae(B)$}
    {Is $\varphi$ contained in some (resp.\ any) stable expansion of $\Sigma$?}
  
  A central tool for the study of the computational complexity of the above problems is the following finite characterization of stable expansions given by Niemel\"a~\cite{nie91}.
  
  \begin{definition} \label{def:full-set}
    For a set $\Sigma \subseteq \allAutoepistemicFormulae$, a set $ \Lambda \subseteq \SF^L(\Sigma) \cup \neg \SF^L(\Sigma)$ is \emph{$\Sigma$-full} if for each $L\varphi \in \SF^L(\Sigma)$,
    \begin{enumerate}
      \item $\Sigma \cup \Lambda \models \varphi$ iff $L \varphi \in \Lambda$,
      \item $\Sigma \cup \Lambda \not\models \varphi$ iff $\neg L \varphi \in \Lambda$.
    \end{enumerate}
  \end{definition}
  
  \begin{lemma}[\cite{nie91}]\label{lem:correspondence-exp-fullset}
    Let $\Sigma \subseteq \allAutoepistemicFormulae$.
    \begin{enumerate}
      \item Let $\Lambda$ be a $\Sigma$-full set, then for every $L\varphi \in \SF^L(\Sigma)$ either $L\varphi \in \Lambda$ or $\neg L\varphi \in\Lambda$.
      \item The stable expansions of $\Sigma$ and $\Sigma$-full sets are in one-to-one correspondence.
    \end{enumerate}
  \end{lemma}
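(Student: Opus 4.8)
For the first claim, fix $L\varphi \in \SF^L(\Sigma)$. At the meta-level exactly one of $\Sigma \cup \Lambda \models \varphi$ and $\Sigma \cup \Lambda \not\models \varphi$ holds. In the former case the first condition of Definition~\ref{def:full-set} forces $L\varphi \in \Lambda$, in the latter the second condition forces $\neg L\varphi \in \Lambda$; since the two alternatives are mutually exclusive, in fact exactly one of $L\varphi,\neg L\varphi$ lies in $\Lambda$. Thus part~1 is immediate from the definition of $\Sigma$-fullness.

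For part~2 the plan is to exhibit two mutually inverse maps. To a stable expansion $\Delta$ I associate its \emph{trace}
\[
  \Lambda_\Delta := \{L\varphi \in \SF^L(\Sigma) : \varphi \in \Delta\} \cup \{\neg L\varphi : L\varphi \in \SF^L(\Sigma),\ \varphi \notin \Delta\},
\]
and to a $\Sigma$-full set $\Lambda$ I associate an expansion $\Delta_\Lambda$ defined by recursion on the nesting depth of $L$: for a query $\chi$, replace each maximal $L$-subformula $L\psi$ by $\true$ or $\false$ according to whether $\Sigma \cup \Lambda$ entails the analogously reduced form of $\psi$, and put $\chi \in \Delta_\Lambda$ iff $\Sigma \cup \Lambda$ propositionally entails the resulting objective formula. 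I would then check that $\Lambda_{\Delta_\Lambda} = \Lambda$ and $\Delta_{\Lambda_\Delta} = \Delta$, which makes the two maps inverse bijections.

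The technical engine for both directions is a \emph{locality} observation: since $\models$ treats each $L\psi$ as an atom and is compact, the truth of $\Sigma \cup \Xi \models \chi$ is unchanged if the belief set $\Xi$ is modified only at $L$-atoms that occur neither in $\Sigma$ nor in $\chi$. Because every stable expansion is stable (so $L\psi \in \Delta$ iff $\psi \in \Delta$), any query $\chi$ may have each of its $L$-subformulae substituted by its truth value and thereby collapsed to an objective formula; combined with locality this yields, for every $\chi$ all of whose $L$-subformulae lie in $\SF^L(\Sigma)$, the equivalence $\chi \in \Delta \iff \Sigma \cup \Lambda_\Delta \models \chi$ (here $\Lambda_\Delta$ already assigns a complete, consistent value to each atom of $\SF^L(\Sigma)$, agreeing there with $L(\Delta) \cup \neg L(\bar{\Delta})$). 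Specialising to the $\varphi$ with $L\varphi \in \SF^L(\Sigma)$ shows $\Lambda_\Delta$ is $\Sigma$-full, and the substitution-plus-induction argument shows that $\Delta$ is completely determined by $\Lambda_\Delta$, giving injectivity of $\Delta \mapsto \Lambda_\Delta$.

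The main obstacle is surjectivity: proving that the recursively defined $\Delta_\Lambda$ is genuinely a stable expansion, that is, that it satisfies $\Delta_\Lambda = \theorems{\Sigma \cup L(\Delta_\Lambda) \cup \neg L(\overline{\Delta_\Lambda})}$, and that its trace is exactly $\Lambda$. The inclusion of $\Delta_\Lambda$ in the right-hand side is routine once one verifies $\Lambda \subseteq L(\Delta_\Lambda) \cup \neg L(\overline{\Delta_\Lambda})$ from fullness; the reverse inclusion is where fullness really does the work, since it guarantees that the commitments $\Lambda$ makes about the subformulae of $\Sigma$ are consistent with, and already account for, every consequence drawn via the complete belief set $L(\Delta_\Lambda) \cup \neg L(\overline{\Delta_\Lambda})$ — the locality observation is precisely what removes the apparent circularity caused by $L$-subformulae of $\chi$ lying outside $\SF^L(\Sigma)$. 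Finally, $\Lambda_{\Delta_\Lambda} = \Lambda$ reduces, again by fullness, to the equivalence $\Sigma \cup \Lambda \models \varphi \iff L\varphi \in \Lambda$ for $L\varphi \in \SF^L(\Sigma)$, which closes the correspondence.
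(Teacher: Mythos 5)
A preliminary remark: the paper never proves this lemma itself --- it is imported verbatim from Niemel\"a~\cite{nie91} --- so the comparison is against Niemel\"a's argument and against the machinery the paper builds on top of the lemma. Your proposal essentially rediscovers that machinery: your reconstruction map $\Lambda \mapsto \Delta_\Lambda$ is exactly the paper's $\SE(\Lambda)$ defined through the recursive relation $\models_L$ of Definition~\ref{def:modelsL}, and your claimed equivalence $\chi \in \Delta \iff \Sigma \cup \Lambda_\Delta \models \chi$ is Lemma~\ref{lem:eqv_membership_SE}. Your part~1 is correct and complete, and the overall plan for part~2 (trace map, reconstruction map, show they are mutually inverse) is the standard and correct one.

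However, part~2 as written has a genuine gap, and it sits exactly where the content of the lemma lies. Fullness (Definition~\ref{def:full-set}) is stated in terms of plain entailment $\Sigma \cup \Lambda \models \psi$, whereas membership in your $\Delta_\Lambda$ is defined via entailment of the \emph{recursively reduced} form of $\psi$. Every step you attribute to fullness --- that $\Lambda \subseteq L(\Delta_\Lambda) \cup \neg L(\overline{\Delta_\Lambda})$, that $\theorems{\Sigma \cup L(\Delta_\Lambda) \cup \neg L(\overline{\Delta_\Lambda})} \subseteq \Delta_\Lambda$, and that $\Lambda_{\Delta_\Lambda} = \Lambda$ --- silently uses the bridge claim: for every $L\psi \in \SF^L(\Sigma)$, $\psi \in \Delta_\Lambda$ iff $L\psi \in \Lambda$ iff $\Sigma \cup \Lambda \models \psi$. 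This claim does not follow from fullness alone; it requires an induction on the $L$-nesting depth of $\psi$ inside $\Sigma$: by part~1, $\Lambda$ is a consistent and complete set of literals over the atoms $\SF^L(\Sigma)$; the maximal $L$-atoms of $\psi$ again lie in $\SF^L(\Sigma)$; the induction hypothesis says their recursive values agree with their $\Lambda$-values; hence modulo $\Lambda$ the reduced form of $\psi$ is equivalent to $\psi$ itself, so $\Sigma \cup \Lambda$ entails one iff it entails the other, and only at this point does fullness deliver the claim. Your locality observation cannot substitute for this induction: locality lets you discard literals on atoms occurring neither in $\Sigma$ nor in $\chi$, but it says nothing about the relationship between $\models$ and the reduced entailment, which is the actual circularity to be broken; writing that fullness ``guarantees'' and ``already accounts for'' the relevant consequences restates the goal rather than proving it. A secondary point: your substitution argument presupposes that each $L$-atom has a unique truth value in $\Delta$, which fails for the inconsistent expansion $\Delta = \allAutoepistemicFormulae$; this case (whose trace is $\SF^L(\Sigma)$ and which arises exactly when $\Sigma \cup \SF^L(\Sigma)$ is inconsistent, cf.\ Lemma~\ref{lem:inconsistent-expansion}) needs to be treated separately on both sides of the correspondence.
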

  
  To be more precise, 
  say that a formula is \emph{quasi-atomic} if it is atomic or else begins with an $L$.
  Further denote by $\SF^q(\varphi)$ the set of all maximal quasi-atomic subformulae of $\varphi$ 
  (in the sense that a quasi-atomic subformula is maximal if it is not a subformula of another quasi-atomic subformula of $\varphi$).
  Write $\SE(\Lambda)$ for the stable expansion corresponding to $\Lambda$ and say that $\Lambda$ is its \emph{kernel}.
  
  \begin{definition}\label{def:modelsL}
    Let $\Sigma \subseteq \allAutoepistemicFormulae$ and let $\varphi \in \allAutoepistemicFormulae$.
    We define the consequence relation $\models_L$ recursively as 
    $
      \Sigma \models_L \varphi 
      \iff
      \Sigma \cup \mathrm{SB}(\varphi) \models \varphi,
    $
    where $\mathrm{SB}(\varphi):= \{ L\chi \in \SF^q(\varphi) \mid \Sigma \models_L \chi \} \cup \{\neg L \chi \mid L\chi \in \SF^q(\varphi), \Sigma \not\models_L \chi \}$.
  \end{definition}
  
  The point in defining the consequence relation $\models_L$ is that, once a $\Sigma$-full set has been determined, it describes membership in the stable expansion corresponding to $\Lambda$.
 
  \begin{lemma}[\cite{nie91}]\label{lem:eqv_membership_SE}
    Let $\Sigma \subseteq \allAutoepistemicFormulae$, let $\Lambda$ be a $\Sigma$-full set and $\varphi \in \allAutoepistemicFormulae$.
    It holds that $\Sigma \cup \Lambda \models_L \varphi$ iff $\varphi \in \SE(\Lambda)$.
  \end{lemma}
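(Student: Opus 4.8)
The plan is to fix the stable expansion $\Delta := \SE(\Lambda)$ and to prove the equivalence $\Sigma \cup \Lambda \models_L \varphi \iff \varphi \in \Delta$ by structural induction on $\varphi$. Two facts about $\Delta$ are used throughout. First, since $\Delta$ is a stable expansion it satisfies the fixpoint equation, so that $\varphi \in \Delta$ holds exactly when $\Sigma \cup L(\Delta) \cup \neg L(\bar{\Delta}) \models \varphi$. Second, since $\Delta$ is the expansion corresponding to $\Lambda$ under the bijection of Lemma~\ref{lem:correspondence-exp-fullset}, for every $L\chi \in \SF^L(\Sigma)$ we have $L\chi \in \Lambda \iff \chi \in \Delta$ and $\neg L\chi \in \Lambda \iff \chi \notin \Delta$; that is, $\Lambda$ and the complete belief set $L(\Delta) \cup \neg L(\bar{\Delta})$ assign the same truth value to every $L$-subformula occurring in $\Sigma$. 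The induction is well founded because the recursive clause for $\models_L$ descends only into the arguments $\chi$ of maximal quasi-atomic $L$-subformulae $L\chi$ of $\varphi$, each of which is a proper subformula of $\varphi$.

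The driving observation is that, for the purpose of the propositional consequence relation $\models$, the truth of a formula depends only on the truth values assigned to its maximal quasi-atomic subformulae, i.e.\ to the members of $\SF^q(\varphi)$, which are precisely its propositional variables and its maximal $L$-subformulae. Unfolding the definition of $\models_L$ gives $\Sigma \cup \Lambda \models_L \varphi \iff \Sigma \cup \Lambda \cup \mathrm{SB}(\varphi) \models \varphi$, where $\mathrm{SB}(\varphi)$ contains, for each maximal quasi-atomic $L\chi \in \SF^q(\varphi)$, the literal $L\chi$ if $\Sigma \cup \Lambda \models_L \chi$ and $\neg L\chi$ otherwise. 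Each such $\chi$ is a proper subformula of $\varphi$, so the induction hypothesis yields $\Sigma \cup \Lambda \models_L \chi \iff \chi \in \Delta$, whence $\mathrm{SB}(\varphi)$ assigns to every maximal $L$-subformula of $\varphi$ exactly the truth value dictated by $L(\Delta) \cup \neg L(\bar{\Delta})$. Combining this with the second fact above, the premise sets $\Sigma \cup \Lambda \cup \mathrm{SB}(\varphi)$ and $\Sigma \cup L(\Delta) \cup \neg L(\bar{\Delta})$ agree on every $L$-atom that actually occurs in $\Sigma$ or in $\varphi$. Since the truth of $\varphi$ under $\models$ is determined by these atoms alone, the two entailments coincide, and therefore $\Sigma \cup \Lambda \models_L \varphi \iff \Sigma \cup L(\Delta) \cup \neg L(\bar{\Delta}) \models \varphi \iff \varphi \in \Delta$, as required. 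The base case $\varphi = p$ is the instance of this argument in which $\mathrm{SB}(\varphi)$ is empty.

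The step I expect to require the most care is the atom-agreement argument, namely justifying that only the $L$-subformulae occurring in $\Sigma$ and in $\varphi$ are relevant to the entailment. The complete belief set $L(\Delta) \cup \neg L(\bar{\Delta})$ commits to a literal for every formula in $\allAutoepistemicFormulae$, whereas $\Lambda \cup \mathrm{SB}(\varphi)$ constrains only finitely many $L$-atoms; one must argue that the surplus commitments in $L(\Delta) \cup \neg L(\bar{\Delta})$ (and, symmetrically, the literals of $\Lambda$ about non-maximal, nested $L$-subformulae of $\Sigma$) concern atoms appearing neither in the propositional skeleton of $\Sigma$ nor in $\varphi$, and hence cannot affect whether $\varphi$ is entailed. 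A subsidiary point is to verify that $\Lambda$ and $\mathrm{SB}(\varphi)$ never conflict: when a maximal $L$-subformula of $\varphi$ already lies in $\SF^L(\Sigma)$, the induction hypothesis together with the correspondence fact guarantees that both sets assign it the same literal, so $\Sigma \cup \Lambda \cup \mathrm{SB}(\varphi)$ stays consistent in the intended sense. Once this bookkeeping is in place the induction closes uniformly over all Boolean connectives, since the outer connective of $\varphi$ is simply interpreted by $\models$ over the already-fixed atoms.
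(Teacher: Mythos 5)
The paper offers no proof of this lemma at all: it is imported from Niemel\"{a}~\cite{nie91} (note the citation in the lemma's statement), so there is no in-paper argument to compare yours against; what follows is an assessment of your reconstruction on its own terms. Your proof is sound. The two facts you isolate are exactly what is needed: the fixpoint equation for $\Delta = \SE(\Lambda)$, and the fact that the correspondence of Lemma~\ref{lem:correspondence-exp-fullset} is the natural one, \emph{i.e.}, that $\Lambda$ is precisely the trace of $L(\Delta) \cup \neg L(\bar{\Delta})$ on $\SF^L(\Sigma) \cup \neg\SF^L(\Sigma)$. Be aware that this second fact is not derivable from the bare statement of Lemma~\ref{lem:correspondence-exp-fullset}, which only asserts that \emph{some} bijection exists; you (correctly) take the intended correspondence from Niemel\"{a} as the definition of $\SE(\Lambda)$ --- without pinning this down the lemma is not even well posed, so making it explicit is a virtue, not a gap. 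Your induction is well founded for the reason you state, and the inductive step reduces to the relevance claim you flag as delicate. That claim does hold, and the argument you sketch goes through: by Lemma~\ref{lem:correspondence-exp-fullset} and the induction hypothesis, $\Lambda \cup \mathrm{SB}(\varphi)$ and $L(\Delta) \cup \neg L(\bar{\Delta})$ place \emph{identical} literals on every $L$-atom occurring propositionally in $\Sigma$ or $\varphi$, and the surplus literals of the full belief set are one-per-atom over $L$-atoms foreign to $\Sigma \cup \{\varphi\}$; hence any countermodel of $\Sigma \cup \Lambda \cup \mathrm{SB}(\varphi) \models \varphi$ extends to a countermodel of $\Sigma \cup L(\Delta) \cup \neg L(\bar{\Delta}) \models \varphi$ by setting each fresh atom $L\psi$ to the truth value of ``$\psi \in \Delta$'', while the converse direction is monotonicity of $\models$, since $\Lambda \cup \mathrm{SB}(\varphi) \subseteq L(\Delta) \cup \neg L(\bar{\Delta})$. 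This countermodel-extension step is the one piece you describe rather than execute; writing it out is routine, and it is worth noting that it also covers the degenerate case $\Delta = \allAutoepistemicFormulae$ uniformly (there $\neg L(\bar{\Delta}) = \emptyset$ and all literals are positive), so no separate treatment of the inconsistent expansion is required.
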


\section{Complexity Results} \label{sect:complexity_results}

  The complexity of the before defined decision problems for autoepistemic logic has already been investigated by Niemel\"{a}~\cite{nie91} and Gottlob~\cite{gottlob:92}. Niemel\"{a}~\cite{nie91} proved that in order to show that a set $\Sigma$ has a stable expansion, we may guess a subset $ \Lambda \subseteq \SF^L(\Sigma) \cup \neg \SF^L(\Sigma)$ nondeterministically and check that it is  \emph{$\Sigma$-full} (see Definition \ref{def:full-set} and Lemma \ref{lem:correspondence-exp-fullset}). Thus the problem of deciding whether $\Sigma$ has a stable expansion is nondeterministically Turing-reducible to the propositional implication problem. Thus he proved that the extension existence problem is in $\SigmaPtwo$. According to Definition \ref{def:modelsL} and Lemma \ref{lem:eqv_membership_SE} the problem of  deciding whether there exists a stable expansion $\Sigma$ containing a given formula $\phi$ can be solved with a polynomial number of calls to an $\NP$-oracle by a nondeterministic Turing reduction as follows. Guess a subset $\Lambda$; check that $\Lambda$ is $\Sigma$-full and check that $\phi\in \SE(\Lambda)$. Therefore, the brave reasoning problem is in $\SigmaPtwo$, whereas the cautious reasoning problem is in $\PiPtwo$.
   Corresponding hardness results were obtained by  Gottlob~\cite{gottlob:92}. More precisely he obtained completeness  results for the special case   $B=\{\land,\lor,\neg\}$. 
   
  We investigate here the complexity of these problems for every $B$. Observe that the upper bounds, \emph{i.e.}, membership in $\SigmaPtwo$ (resp.\ $\PiPtwo$) still hold for any $B$. In order to classify the complexity for the infinitely many cases of $B$ we will make use of Post's lattice as follows: Suppose that $B\subseteq [B']$ for some finite sets $B,B'$ of Boolean functions. Then every function in $B$ can be expressed as a composition of functions from $B'$; in other words: for every $f\in B$ there is a propositional formula $\phi_{f}$ over connectives from $B'$ defining $f$, and every $\allAutoepistemicFormulae(B)$-formula can be transformed into an equivalent $\allAutoepistemicFormulae(B')$-formula. If moreover in the formulae $\phi_{f}$ every free variable appears only once (in this case we say that $\phi_{f}$ is a \emph{small} formula for $f$; and in the proofs below we will see that we can always construct such small formulae), then the transformation of a $\allAutoepistemicFormulae(B)$-formula $\psi$ into an equivalent $\allAutoepistemicFormulae(B')$-formula $\psi'$ is efficient in the sense that the length of $\psi'$ can be bounded by a polynomial  in the length of  $\psi$. Thus,  the  upper bound for the complexity of $\EXP(B')$ yields an upper bound for the complexity of $\EXP(B)$, and a lower bound for the complexity of $\EXP(B)$ yields a lower bound for the complexity of $\EXP(B')$. If $[B]=[B']$ then $\EXP(B)$ and $\EXP(B')$ are of the same complexity (w.r.t.~logspace reductions). Thus, the complexity of $\EXP(B)$ is determined by the clone $[B]$. This already brings some structure into the infinitely many problems $\EXP(B')$.

  We next note that we may w.l.o.g.\ assume the availability of the Boolean constants.%
  \setlength{\columnsep}{20pt}
  \begin{wrapfigure}[13]{r}{0.5\textwidth}
    \centering
    \includegraphics[width=\linewidth]{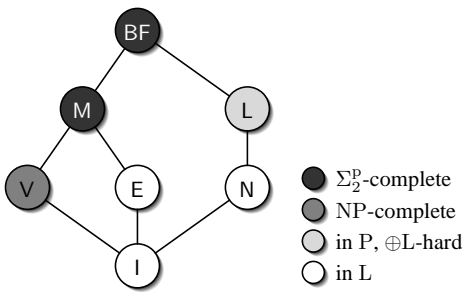}
    \caption{%
      \label{fig:post's-lattice}%
      Relevant clones and their inclusion structure;
      the shading indicates the complexity of $\EXP(B)$.
    }
  \end{wrapfigure}
  
  \begin{lemma} \label{lem:always-constants}
    Let $\mathcal{P}$ be any of the problems $\EXP$, $\MEMC$, or $\MEMB$.
    Then $\mathcal{P}(B) \equivlogm \mathcal{P}(B\cup \{\false,\true\})$ for all finite sets $B$ of Boolean functions.
  \end{lemma}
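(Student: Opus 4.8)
The plan is to prove the two reductions separately. The direction $\mathcal{P}(B) \leqlogm \mathcal{P}(B \cup \{\false,\true\})$ is immediate: every $\allAutoepistemicFormulae(B)$-formula is already an $\allAutoepistemicFormulae(B\cup\{\false,\true\})$-formula, so the identity map is a valid logspace reduction for each of $\EXP$, $\MEMB$, $\MEMC$. The content of the lemma is therefore the converse direction, where we must eliminate the two constants from a given instance while using only connectives from $B$.

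The key idea is that the modal operator $L$ --- which is available regardless of $B$ --- already provides a reliable constant. First I would fix two propositional atoms $t$ and $q$ occurring nowhere in the given instance. Define a substitution $\sigma$ replacing every occurrence of $\true$ by the atom $t$ and every occurrence of $\false$ by the formula $Lq$, and apply $\sigma$ to all premises (and, for $\MEMB$/$\MEMC$, to the query $\varphi$). Finally set $\Sigma' := \sigma(\Sigma) \cup \{t\}$, i.e., add the bare atom $t$ as an additional premise. All of this is a purely textual transformation computable in logspace, and $\Sigma'$ and $\sigma(\varphi)$ are genuine $\allAutoepistemicFormulae(B)$-expressions, since $\sigma$ introduces only atoms and the operator $L$.

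The two atoms play complementary roles. Because $t$ is a premise, $\Sigma' \models t$, so $t$ is true in every model of $\Sigma'\cup\Lambda$ and faithfully represents $\true$; moreover $t$ is not $L$-prefixed, so it does not enter $\SF^L(\Sigma')$ and cannot perturb the full-set structure. Dually, since $q$ occurs only inside the subformula $Lq$ and never as an object atom, it is underivable: $\Sigma'\cup\Lambda\not\models q$ for every candidate $\Lambda$. Hence by Definition~\ref{def:full-set} every $\Sigma'$-full set must contain $\neg Lq$, so the atom $Lq$ is forced false in every model of $\Sigma'\cup\Lambda$ and faithfully represents $\false$. This is exactly the autoepistemic analogue of an ungrounded belief, and it is what lets the construction work uniformly for every $B$, including connective sets from which no ``false-from-true'' term can be built.

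It then remains to verify that these forced values make $\sigma$ semantics-preserving. The main step is a bijection between the $\Sigma$-full sets and the $\Sigma'$-full sets, sending $\Lambda$ to the set obtained by applying $\sigma$ to each of its $L$-literals and adjoining $\neg Lq$. The heart of the argument is the claim that, for any $L\chi \in \SF^L(\Sigma)$, $\Sigma'\cup\Lambda' \models \sigma(\chi) \iff \Sigma\cup\Lambda \models \chi$, which follows because under any assignment making $t$ true and $Lq$ false the formulas $\sigma(\psi)$ and $\psi$ evaluate identically; checking that this equivalence respects Definition~\ref{def:full-set} then yields that $\Lambda'$ is $\Sigma'$-full iff $\Lambda$ is $\Sigma$-full. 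Invoking Lemma~\ref{lem:correspondence-exp-fullset} transfers this to a bijection of stable expansions (settling $\EXP$), and combining it with the membership characterization of Lemma~\ref{lem:eqv_membership_SE} shows $\sigma(\varphi)\in\SE(\Lambda')$ iff $\varphi\in\SE(\Lambda)$, settling $\MEMB$ and $\MEMC$. The one place demanding care --- and the main obstacle --- is precisely this correspondence: one must confirm that introducing $Lq$ adds exactly one new, unconditionally determined $L$-subformula (so that no full sets are created or destroyed) and that the substitution commutes with $\models$ even when $\false$ occurs nested inside an $L$-scope, where $L(\false\land\cdots)$ must match the behaviour of the substituted $L(Lq\land\cdots)$.
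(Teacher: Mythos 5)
Your proposal is correct and is essentially the paper's own proof: the identical construction $\Sigma' := \Sigma[\true/t,\ \false/Lq] \cup \{t\}$ with fresh atoms, justified by the same observation that every expansion of $\Sigma'$ is forced to contain $t$ and $\neg Lq$, and then lifted to a one-to-one correspondence of full sets (hence of stable expansions, hence of membership). The only difference is presentational: you spell out the full-set bijection, the treatment of the query formula for $\MEMB$/$\MEMC$, and the nesting of $\false$ under $L$, all of which the paper compresses into a two-sentence argument.
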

  
  \begin{proof}
    For the nontrivial direction,
    let $\Sigma \in \allAutoepistemicFormulae$.
    We map $\Sigma$ to $\Sigma':=\Sigma[\true/t,\false/Lf] \cup \{t\}$, 
    where $t$ and $f$ are fresh propositions.
    Then the stable expansions of $\Sigma'$ and $\Sigma$ are in one-to-one correspondence, 
    as any expansion of $\Sigma'$ includes $t$ and $\neg Lf$.
  \end{proof}

As a consequence of Lemma \ref{lem:always-constants}, it suffices to consider the clones of the form $[B\cup\{0,1\}]$ (as can be seen immediately from the list of clones given in \cite{bcrv03}). These are the seven clones $\CloneI$, $\CloneN$, $\CloneV$,  $\CloneE$,  $\CloneL$,  $\CloneM$, and $\CloneBF$ (see Fig.~\ref{fig:post's-lattice}).
All other cases will have the same complexity of these, by the explanations above. 
Before we start proving our classification, we note one further observation:

  \begin{lemma} \label{lem:inconsistent-expansion}
    For every set $\Sigma \subseteq \allAutoepistemicFormulae$, 
    $\allAutoepistemicFormulae$ is a stable expansion of $\Sigma$ iff $\Sigma \cup \SF^L(\Sigma)$ is inconsistent.
  \end{lemma}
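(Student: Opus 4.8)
The plan is to unfold the definition of stable expansion for the specific candidate $\Delta = \allAutoepistemicFormulae$ and to reduce the fixed-point condition to a consistency statement. Since $\Delta = \allAutoepistemicFormulae$, its complement is empty, so $\neg L(\bar\Delta) = \emptyset$, whereas $L(\Delta) = \{L\varphi \mid \varphi \in \allAutoepistemicFormulae\}$ is the set of \emph{all} $L$-prefixed formulae. Hence $\allAutoepistemicFormulae$ is a stable expansion of $\Sigma$ exactly when $\theorems{\Sigma \cup L(\allAutoepistemicFormulae)} = \allAutoepistemicFormulae$. As $\theorems{X} = \{\varphi \mid X \models \varphi\}$ equals the whole language iff $X$ is inconsistent (treating each $L\psi$ as an atom), the lemma reduces to the equivalence
\[
  \Sigma \cup L(\allAutoepistemicFormulae) \text{ inconsistent} \iff \Sigma \cup \SF^L(\Sigma) \text{ inconsistent}.
\]

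For the direction from right to left I would simply use monotonicity of inconsistency: every $L$-prefixed subformula of $\Sigma$ is in particular an $L$-prefixed formula, so $\SF^L(\Sigma) \subseteq L(\allAutoepistemicFormulae)$, and a superset of an inconsistent set is inconsistent. The converse is the substantial direction, which I would prove by contraposition. Assuming $\Sigma \cup \SF^L(\Sigma)$ is consistent, fix a propositional assignment $\sigma$ (to the propositions and to the $L$-prefixed formulae read as atoms) satisfying it, and define $\sigma'$ to agree with $\sigma$ on the propositions while sending every $L$-prefixed formula to true. Then $\sigma'$ satisfies $L(\allAutoepistemicFormulae)$ by construction, and it still satisfies $\Sigma$ because the truth value of $\Sigma$ depends only on the atoms occurring in it --- its propositions and its maximal $L$-prefixed subformulae --- all of which lie in $\SF^L(\Sigma)$ and are thus already forced to true by $\sigma$.

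The step I expect to require the most care is the treatment of nested occurrences of $L$: by definition $\SF^L(\Sigma)$ collects \emph{all} $L$-prefixed subformulae, including those sitting inside a larger $L$-prefixed subformula, yet under the convention that $L\psi$ is atomic only the maximal ones are genuine atoms of $\Sigma$. I must therefore argue explicitly that forcing all remaining $L$-formulae to true imposes no constraint on $\Sigma$, since each such formula is either already forced true by $\sigma$ or does not occur in $\Sigma$ at all and hence behaves as a fresh atom. Once this is made precise, both directions close and the characterisation follows.
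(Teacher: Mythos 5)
Your proof is correct, but it takes a genuinely different route from the paper's on the ``only if'' direction. Both arguments dispose of the ``if'' direction the same way: since $\SF^L(\Sigma) \subseteq L(\allAutoepistemicFormulae)$, inconsistency of $\Sigma \cup \SF^L(\Sigma)$ propagates to $\Sigma \cup L(\allAutoepistemicFormulae)$, whose deductive closure is then all of $\allAutoepistemicFormulae$. For the converse, however, the paper never unfolds the fixed-point definition: it invokes Niemel\"a's machinery, taking the kernel $\Lambda$ of the expansion $\allAutoepistemicFormulae$ (Lemma~\ref{lem:correspondence-exp-fullset}), deducing $\Sigma \cup \Lambda \models_L \false$ from Lemma~\ref{lem:eqv_membership_SE}, hence $\Sigma \cup \Lambda \models \false$, and then using $\Sigma$-fullness to conclude $\Lambda = \SF^L(\Sigma)$, so that $\Sigma \cup \SF^L(\Sigma) = \Sigma \cup \Lambda$ is inconsistent. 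You instead reduce the entire lemma to the single equivalence ``$\Sigma \cup L(\allAutoepistemicFormulae)$ inconsistent iff $\Sigma \cup \SF^L(\Sigma)$ inconsistent'' and prove the nontrivial implication semantically, by extending a satisfying assignment of $\Sigma \cup \SF^L(\Sigma)$ to one that sets every $L$-prefixed atom to true; the point you flag about nested $L$-subformulae (they sit inside the maximal quasi-atomic subformulae of $\Sigma$, which are opaque, so forcing them to true constrains nothing) is exactly the step needing care, and you handle it correctly --- note only that it is the maximal $L$-prefixed subformulae, not the propositions, that lie in $\SF^L(\Sigma)$, as your phrasing momentarily conflates. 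What each approach buys: yours is self-contained, requiring only the definition of stable expansion and propositional semantics, and in particular it does not rely on the full-set correspondence, which the paper cites from Niemel\"a rather than proves; the paper's argument is shorter given the machinery already in place and yields the additional information that the kernel of the inconsistent expansion is precisely $\SF^L(\Sigma)$, a fact that is convenient when working with $\SE(\Lambda)$ elsewhere.
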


  \begin{proof}
    Suppose that $\allAutoepistemicFormulae$ is a stable expansion of $\Sigma$ and let $\Lambda$ denote its kernel.
    Then $\Sigma \cup \Lambda \models_L \false$ by virtue of Lemma~\ref{lem:eqv_membership_SE}.
    As $\Sigma \cup \Lambda \models_L \false$ iff $\Sigma \cup \Lambda \models \false$, we obtain  $\Lambda=\SF^L(\Sigma)$ 
    (notice that $\{L\chi \mid L\chi \in \SF^q(\false)\}=\emptyset$, cf.\ Definition~\ref{def:modelsL}).
    In conclusion, $\Sigma \cup \SF^L(\Sigma)$ must be inconsistent.
    Conversely suppose that $\Sigma \cup \SF^L(\Sigma)$ is inconsistent. 
    Then so is $\theorems{\Sigma \cup L(\allAutoepistemicFormulae)}$.
    Consequently, any stable expansion must contain all autoepistemic formulae.
  \end{proof}

\subsection{Expansion Existence}

  \begin{theorem} \label{thm:ext}
    Let $B$ be a finite set of Boolean functions.
    \begin{itemize}
    \item  If $[B\cup\{0,1\}]$ is $\CloneBF$ or $\CloneM$ then $\EXP(B)$ is $\SigmaPtwo$-complete.
    \item If $[B\cup\{0,1\}]$ is $\CloneV$ then $\EXP(B)$ is $\NP$-complete.
    \item If $[B\cup\{0,1\}]$ is $\CloneL$ then $\EXP(B)$ is $\parityL$-hard and contained in $\P$.
    \item If $[B\cup\{0,1\}]$ is $\CloneE$ or $\CloneN$ or $\CloneI$ then $\EXP(B)$ is in $\L$ (solvable in logspace).
    \end{itemize}
  \end{theorem}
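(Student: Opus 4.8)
The plan is to exploit the finite characterization of stable expansions through $\Sigma$-full sets (Lemma~\ref{lem:correspondence-exp-fullset}): a set $\Sigma$ has a stable expansion iff it possesses a $\Sigma$-full set $\Lambda \subseteq \SF^L(\Sigma) \cup \neg\SF^L(\Sigma)$. By Lemma~\ref{lem:always-constants} we may freely assume the constants, so it suffices to treat the seven clones $\CloneI,\CloneN,\CloneV,\CloneE,\CloneL,\CloneM,\CloneBF$. The decisive observation for the upper bounds is that the cost of verifying $\Sigma$-fullness is governed entirely by the cost of the underlying propositional implication problem $\Sigma\cup\Lambda\models\varphi$ restricted to the fragment at hand: guessing $\Lambda$ contributes the outer existential quantifier, while the two conditions of Definition~\ref{def:full-set} amount to polynomially many such implication checks (the second being a non-implication). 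Since implication is always a $\co\NP$ predicate, guess-and-check gives membership in $\NP^{\co\NP}=\SigmaPtwo$, which is exactly the general bound already noted in the text and supplies the upper bounds for $\CloneBF$ and $\CloneM$. For the remaining fragments I would first pin down the exact complexity of the implication problem and then lift it through this scheme.

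For the $\SigmaPtwo$-hardness I would start with $\CloneBF$: since $[\{\land,\lor,\neg\}]=\CloneBF$ and Gottlob established completeness for that base, the clone argument with small formulae (every connective of a $\CloneBF$-base being definable by a formula in which each variable occurs once) transfers his reduction to any $B$ with $[B\cup\{\false,\true\}]=\CloneBF$. The genuinely hard case is $\CloneM$, where the propositional part is monotone and explicit negation is unavailable. The key idea is that negation need not be supplied by the connectives at all: the full-set semantics already distinguishes $L\psi$ from $\neg L\psi$, so the polarity of a literal can be carried by whether an $L$-prefixed guard lies in $\Lambda$. Concretely, I would reduce from the $\SigmaPtwo$-complete problem $\exists X\,\forall Y\,\varphi$ and introduce, for each variable, such a guard; the monotone connectives $\lor,\land$ then suffice to wire together the clauses, while the branching over $\Lambda$ realizes both the existential choice on $X$ and the universal verification over $Y$. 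Complementarity between a variable and its negation must be enforced through the interaction of the fullness conditions of Definition~\ref{def:full-set} rather than through the (unavailable) propositional negation, and showing that this interaction admits no spurious full sets is where the main effort lies.

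For $\CloneV=[\{\lor,\false,\true\}]$ every formula is equivalent to a positive clause over propositions and $L$-atoms, so implication is decidable in polynomial time; the scheme above then places $\EXP(B)$ in $\NP$. Matching $\NP$-hardness I would obtain by a reduction from a satisfiability problem in which the nondeterminism of selecting $\Lambda$ simulates the choice of a satisfying assignment, the positive clauses of $\Sigma$ playing the role of the input clauses. For $\CloneL=[\{\xor,\true\}]$ implication reduces to linear algebra over $\mathrm{GF}(2)$ and is solvable in polynomial time; the extra feature here is that linearity constrains the full set so tightly that it can be computed directly rather than guessed, since the fullness conditions become a system of affine equations in the truth values of the $L$-atoms. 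This yields the $\P$ upper bound, and $\parityL$-hardness I would establish by encoding the solvability of a system of linear equations over $\mathrm{GF}(2)$ --- the canonical $\parityL$-complete problem --- into the $\xor$-structure of $\Sigma$.

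Finally, for $\CloneE,\CloneN,\CloneI$ the connectives are so weak (conjunctions of atoms, literals, and projections respectively) that the implication problem lies in $\L$ and, more importantly, the $\Sigma$-full set is uniquely determined by a monotone forward closure computable in logspace; Lemma~\ref{lem:inconsistent-expansion} handles the degenerate case in which $\allAutoepistemicFormulae$ itself is the expansion. I expect the monotone $\SigmaPtwo$-hardness of the second paragraph to be the principal obstacle: all other cases follow either from transferring a known reduction along the clone structure or from a direct complexity bound on the underlying implication problem, whereas the monotone simulation of negation via the modality requires the most delicate argument to exclude unintended full sets.
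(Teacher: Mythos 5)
Your overall architecture --- the full-set characterization, the reduction to seven clones via Lemma~\ref{lem:always-constants}, and upper bounds driven by the complexity of propositional implication for each fragment --- is exactly the paper's, and your treatment of $\CloneV$, $\CloneE$, $\CloneN$, $\CloneI$ is essentially right. But there is a genuine gap precisely where you predicted the main effort would lie: the $\SigmaPtwo$-hardness for $\CloneM$. You propose to introduce an $L$-guard \emph{for each variable} of $\exists X \forall Y\, \psi$ and assert that ``the branching over $\Lambda$ realizes both the existential choice on $X$ and the universal verification over $Y$.'' That mechanism cannot yield $\SigmaPtwo$-hardness. Existence of a stable expansion has the logical form $\exists \Lambda\,(\Lambda \text{ is } \Sigma\text{-full})$, and fullness is a conjunction of implication and non-implication checks; the only place a universal quantification over assignments can live is \emph{inside} an implication $\Sigma \cup \Lambda \models \chi$. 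If the universal variables are also pinned down by $L$-guards, then every full set fixes a complete assignment to both $X$ and $Y$, and your construction decides $\exists X \exists Y\, \psi$ --- i.e., it is only an $\NP$-hardness reduction. The paper's Proposition~\ref{prop:exp-M} therefore treats the two blocks asymmetrically: the guard clauses $x_i \lor Lx_i'$ and $Lx_i \lor x_i'$ are used \emph{only} for the existential variables, while the universal variables receive the purely propositional clauses $y_j \lor y_j'$ with no $L$ at all. The fullness condition for $L\psi' \in \Sigma'$, namely $\Sigma' \cup \Lambda \models \psi'$, then quantifies over all models of the clauses $y_j \lor y_j'$, and the monotonicity of $\psi' = \psi[\neg y_1/y_1',\ldots,\neg y_m/y_m']$ is exactly what makes this equivalent to $\forall Y\, \psi$: assignments setting both $y_j$ and $y_j'$ to true dominate complementary ones and so cannot falsify $\psi'$ if the complementary ones do not. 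This asymmetry, and the use of monotonicity to absorb the non-complementary assignments, is the key idea missing from your sketch; the difficulty is not ``excluding spurious full sets'' for the universal block but realizing the $\forall$ quantifier by the correct mechanism at all.

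Two smaller gaps in the $\CloneL$ case. First, for $\parityL$-hardness you cannot encode solvability of a linear system ``directly into the $\xor$-structure of $\Sigma$,'' because a purely propositional $\Sigma$ always has a stable expansion ($\Lambda = \emptyset$ is vacuously $\Sigma$-full); some $L$-gadget is unavoidable. The paper reduces from $\IMP(B)$, which is $\parityL$-complete for this clone, by mapping $(\Gamma,\psi)$ to $\Gamma \cup \{L\psi\}$: putting $L\psi$ into $\Sigma$ makes the candidate $\{\neg L\psi\}$ inconsistent with $\Sigma$, so a full set exists iff $\Gamma \models \psi$. Second, your $\P$ upper bound via Gaussian elimination applies as stated only when every $L$-prefixed subformula is $L$-atomic; for compound or nested arguments of $L$ one must first show, as in Proposition~\ref{prop:exp-L}, that introducing fresh variables $y_\varphi$ with defining equations $y_\varphi \equiv \varphi$ puts the full sets of the old and new theories in one-to-one correspondence. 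Both points are fixable, but they are absent from the proposal.
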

 
The proof of this theorem requires several propositions.

  \begin{proposition} \label{prop:exp-M}
    Let $B$ be a finite set of Boolean functions such that $\CloneM\subseteq [B]$.
    Then $\EXP(B)$ is $\SigmaPtwo$-complete.
  \end{proposition}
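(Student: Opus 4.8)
The plan is to establish membership in $\SigmaPtwo$ and $\SigmaPtwo$-hardness separately. Membership follows directly from the general upper bound already noted in the text: for \emph{any} finite set $B$, the expansion existence problem lies in $\SigmaPtwo$, by Niemel\"a's characterization (guess a candidate $\Lambda \subseteq \SF^L(\Sigma) \cup \neg\SF^L(\Sigma)$ and verify $\Sigma$-fullness with a $\coNP$-oracle, cf.\ Lemma~\ref{lem:correspondence-exp-fullset}). So the substantive work is the lower bound.

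For hardness, I would reduce from $\EXP(\{\land,\lor,\neg\})$, which is $\SigmaPtwo$-complete by Gottlob's result for the special case $B=\{\land,\lor,\neg\}$. Since $\CloneM \subseteq [B]$ and $\CloneM$ is generated by $\{\lor,\land,\false,\true\}$, the connectives $\land$ and $\lor$ are expressible over $B$, and the constants are available by Lemma~\ref{lem:always-constants}. The only obstacle is negation: $B$ need not express $\neg$, since $\CloneBF \not\subseteq \CloneM$ in general. First I would argue that negation can be eliminated by a standard renaming trick: push all negations inward using De Morgan's laws (valid since $\land,\lor$ are available) until they apply only to atoms or to $L$-prefixed subformulae, then replace each negated quasi-atom by a fresh dual variable. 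For ordinary atoms $p$, introduce a companion atom $p'$ intended to denote $\neg p$ and add clauses forcing $p' \equiv \neg p$; however, expressing this equivalence itself seemingly requires negation, so the real care is needed here.

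The key idea to circumvent this is that the companion atoms need not be \emph{forced} to be complementary by formulae inside $\Sigma$; instead one exploits the autoepistemic semantics. The hard part will be handling negation correctly within the modal structure, i.e.\ ensuring that the one-to-one correspondence between stable expansions is preserved when negations in front of $L$-subformulae are rewritten. I would treat each $L$-prefixed subformula as an atom (as the consequence relation $\models$ does) and introduce for the atom $L\psi$ a fresh proposition acting as $\neg L\psi$, tying the two together through the $\Sigma$-fullness conditions of Definition~\ref{def:full-set}: since a $\Sigma$-full set contains exactly one of $L\varphi$, $\neg L\varphi$ for each relevant subformula (Lemma~\ref{lem:correspondence-exp-fullset}(1)), the complementarity is enforced by the definition of stable expansion rather than by an explicit negation connective. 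I would then verify that this construction is a small-formula transformation in the sense of the preceding discussion, so that it is a logspace many-one reduction and preserves the existence of stable expansions.

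Thus the overall structure is: (i) cite the $\SigmaPtwo$ upper bound for membership; (ii) reduce from the $\{\land,\lor,\neg\}$ case; (iii) rewrite $\land,\lor$ directly via $B$ and the constants; (iv) eliminate negation by the dual-variable technique, with the complementarity of dual variables enforced through the full-set conditions; and (v) check that the correspondence of Lemma~\ref{lem:correspondence-exp-fullset} lifts through the reduction. The negation-elimination step in (iv) is where I expect the bulk of the technical difficulty to lie, since it must respect both the propositional and the modal layers of the formulae.
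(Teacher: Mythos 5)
Your membership argument is fine and coincides with the paper's (the general $\SigmaPtwo$ upper bound holds for every $B$). The gap lies in the hardness part, precisely in your step (iv). The paper does \emph{not} reduce from $\EXP(\{\land,\lor,\neg\})$ and then compile negation away; it reduces directly from the validity problem for $\exists\forall$-quantified Boolean formulas with DNF matrix, by modifying Gottlob's construction: it sets $\psi' := \psi[\neg x_1/x_1',\ldots,\neg y_m/y_m']$ and
$\Sigma' := \{L\psi'\} \cup \{x_i \lor Lx_i',\ Lx_i \lor x_i' \mid 1 \leq i \leq n\} \cup \{y_j \lor y_j' \mid 1 \leq j \leq m\}$,
and then re-proves correctness by hand for this specific monotone theory, exploiting two things: (a) the clause pair $x_i \lor Lx_i'$, $Lx_i \lor x_i'$ forces exactly one of $Lx_i$, $Lx_i'$ into any full set (this is your ``complementarity via fullness'' device), and (b) $\psi'$ is \emph{monotone}, so entailment over models satisfying only the weak constraints $y_j \lor y_j'$ coincides with satisfaction under all complementary assignments to $y_j, y_j'$. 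Both points depend on the controlled structure of the instance (a single $L\psi'$, dedicated guessing clauses, each dual pair occurring nowhere else); they are not available for arbitrary inputs.

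Your generic negation-elimination, by contrast, fails on concrete instances. Take $\Sigma = \{L\neg p\}$. This $\Sigma$ has no stable expansion: the candidate $\Lambda = \{L\neg p\}$ is not $\Sigma$-full because the consistent set $\{L\neg p\}$ does not entail $\neg p$, the candidate $\Lambda = \{\neg L\neg p\}$ is not $\Sigma$-full because $\Sigma \cup \Lambda$ is then inconsistent and entails $\neg p$, and $\allAutoepistemicFormulae$ is excluded since $\Sigma \cup \SF^L(\Sigma)$ is consistent (Lemma~\ref{lem:inconsistent-expansion}). Now apply the dual-variable renaming with fullness-based tying, i.e. $\Sigma' = \{Lp',\ p \lor Lp',\ Lp \lor p'\}$. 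Then $\Lambda = \{Lp', \neg Lp\}$ \emph{is} $\Sigma'$-full: $\neg Lp$ together with $Lp \lor p'$ derives $p'$, while $p$ is underivable, so all four fullness conditions hold. Hence $\Sigma'$ has a stable expansion although $\Sigma$ has none. The root cause is an asymmetry your renaming cannot repair: a consistent negation-free theory never entails a negative literal, so a belief $L\neg p$ can never be confirmed, whereas its positive proxy $Lp'$ can become derivable through the very clauses meant to enforce complementarity. So the correspondence of full sets is not preserved by the rewriting; it survives only in instances engineered so that the relevant entailments go through, which is exactly what the paper's tailored QBF reduction arranges. Since you yourself locate ``the bulk of the technical difficulty'' in step (iv), the proposal is missing its essential ingredient, and in the generic form you describe, that step is false rather than merely unfinished.
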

  
  \begin{proof}
    Let $B$ be a finite set of Boolean functions as required.
    We have to prove $\SigmaPtwo$-hardness. 
    Let $\varphi := \exists x_1 \cdots \exists x_n \forall y_1 \cdots \forall y_m \psi$ be a quantified Boolean formula in disjunctive normal form. In \cite{gottlob:92}, Gottlob shows that $\varphi$ is satisfied iff the set $\Sigma := \{L\psi, Lx_1 \leftrightarrow x_1,\ldots,Lx_n \leftrightarrow x_n \}$ has a stable expansion. The idea of our proof is to modify the given reduction to only use monotone connectives, thus showing that $\EXP(B)$ is $\SigmaPtwo$-hard for every finite set $B \subseteq \CloneM$.
    More precisely, we define 
    \[
      \psi':=\psi[\neg x_1/x_1',\ldots,\neg x_n/x_n',\neg y_1/y_1',\ldots,\neg y_m/y_m']        
    \]
    and let
    \[
      \Sigma' := \{L\psi'\} \cup \{ x_i \lor Lx_i', Lx_i \lor x_i' \mid 1 \leq i \leq n\} \cup \{ y_j \lor y_j' \mid 1 \leq j \leq m\}.
    \]
    Clearly, $\Sigma' \subseteq \allAutoepistemicFormulae(\{\land,\lor\})$. 
    Moreover, for every $1 \leq i \leq n$, we have that any stable expansion of $\Sigma$ contains either $Lx_i$ or $Lx_i'$ but not both:
    assume that $\Lambda$ is a $\Sigma'$-full set such that $Lx_i \in \Lambda$ and $Lx_i' \in \Lambda$. 
    Then, by definition of $\Sigma'$, $\Sigma' \cup \Lambda \not\models x_i, x_i'$, although $Lx_i, Lx_i' \in \Lambda$;
    a contradiction to  $\Lambda$ being $\Sigma'$-full.
    Otherwise, if $\Lambda$ were a $\Sigma'$-full set such that $\neg Lx_i \in \Lambda$ and $\neg Lx_i' \in \Lambda$, 
    then $\Sigma' \cup \Lambda \models x_i, x_i'$, a contradiction to $\Lambda$ being $\Sigma'$-full, because $Lx_i, Lx_i' \notin \Lambda$. 
    In conclusion, any $\Sigma'$-full set and equivalently any stable expansion contains either $Lx_i$ or $Lx_i'$ but not both.

    We show that $\Sigma'$ has a stable expansion if and only if $\varphi$ is valid.
    First suppose that $\Sigma'$ has a stable expansion $\Delta$. 
    Let $\Lambda$ denote its kernel. 
    As $\Sigma' \cup \SF^L{\Sigma'}$ is consistent, 
    we obtain that $\Delta \neq \allAutoepistemicFormulae$ from Lemma~\ref{lem:inconsistent-expansion}.
    By the argument above, either $Lx_i \in \Delta$ or $Lx_i' \in \Delta$, but not both. 
    Moreover, $L\psi' \in \Delta$, whence $\psi'$ must be derivable from $\Sigma' \cup \Lambda$ by Definition~\ref{def:full-set}. 
    Note that this implies that $\psi'$ is satisfied by all assignments setting either $y_i$ or $y_i'$ to $\true$; 
    in particular, by all assignments that assign a complementary value to $y_i$ and $y'_i$ for every $i$.
    Define a truth assignment $\sigma \colon \{x_i \mid 1 \leq i \leq n\} \to \{\false,\true\}$ from $\Lambda$
    such that $\sigma(x_i) := \true$ if $Lx_i \in \Lambda$, and $\sigma(x_i):= \false$ otherwise. 
    It follows that $\sigma \models \forall y_1 \cdots \forall y_m \psi$, thus $\varphi$ is valid.
    
    Now suppose that $\varphi$ is valid. 
    Then there exists an assignment $\sigma\colon \{x_i \mid 1 \leq i \leq n\} \to \{\false,\true\}$ 
    such that any extension of $\sigma$ to $y_1,\ldots,y_m$ satisfies $\psi$. 
    Let $\Lambda := \{ Lx_i,\neg Lx_i' \mid \sigma(x_i)=\true\} \cup \{\neg Lx_i, Lx_i' \mid \sigma(x_i)=\false\} \cup \{L\psi'\}$. 
    We claim that $\Lambda$ is  $\Sigma'$-full.
    If $Lx_i\in\Lambda$, then $\neg Lx_i' \in\Lambda$; hence $\{Lx'_i \lor x_i, \neg Lx_i'\}$ implies $x_i$. 
    Conversely, if $\Sigma'\cup\Lambda\models x_i$ then $\neg Lx_i'$ has to be in $\Lambda$, 
    because $x_i$ occurs in $L\psi'$ and the clause $Lx'_i \lor x_i$ only.
    From this, we obtain $Lx_i \in \Lambda$.
    Therefore, $\Sigma'\cup\Lambda\models x_i$ if and only if  $Lx_i\in\Lambda$. 
    From the definition of $\Lambda$ now follows that $\Sigma'\cup\Lambda\not\models x_i$ if and only if $\neg Lx_i\in\Lambda$. 
    The same holds for $x_i'$ for each $i$.
    Due to the construction of $\Lambda$, 
    the fact that the clause $y_i\lor y_i'$ enforces $y_i'$ to be assigned a value equal to or bigger than the one assigned to $\neg y_i$, 
    the definition of $\psi'$ and its monotonicity, 
    we also have $\Sigma' \cup \Lambda \models \psi'$. 
    Hence, following Definition~\ref{def:full-set}, $\Lambda$ is a $\Sigma'$-full set and, 
    by Lemma~\ref{lem:correspondence-exp-fullset}, $\Sigma'$ has a stable expansion.

    Finally, note that in any finite set of Boolean functions $B$ such that $\CloneM\subseteq [B]$, 
    conjunction and disjunction can be defined by small formulae, \emph{i.e.}, 
    there exist formulae $\phi_{\land}\equiv x\land y$ and $\phi_{\lor} \equiv x\lor y$ such that 
    $x$ and $y$ occur exactly once these formulae, see \cite{sch05}.
  \end{proof}

  We cannot transfer the above result to $\EXP(B)$ for $[B]=\CloneV$, 
  because we may not assume $\psi$ to be in conjunctive normal form. 
  But, using a similar idea, we can show that the problem is $\NP$-complete.
  
  \begin{proposition} \label{prop:exp-V}
    Let $B$ be a finite set of Boolean functions such that $[B\cup\{0,1\}]=\CloneV$. Then
    $\EXP(B)$ is $\NP$-complete.
  \end{proposition}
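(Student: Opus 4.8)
The plan is to establish the two bounds separately, the upper bound by exploiting that implication for $\CloneV$ is tractable, and the lower bound by adapting the reduction of Proposition~\ref{prop:exp-M}. For membership in $\NP$ I would revisit Niemel\"a's guess-and-check characterization (Lemma~\ref{lem:correspondence-exp-fullset}) and observe that the only reason the generic bound is $\SigmaPtwo$ rather than $\NP$ is the cost of the implication tests used to verify $\Sigma$-fullness. By Lemma~\ref{lem:always-constants} we may assume the constants $\false,\true$ are available, so it suffices to treat $[B]=\CloneV$. A nondeterministic machine guesses a set $\Lambda\subseteq\SF^L(\Sigma)\cup\neg\SF^L(\Sigma)$ containing exactly one of $L\varphi,\neg L\varphi$ for each $L\varphi\in\SF^L(\Sigma)$ (hence of polynomial size) and then checks the two conditions of Definition~\ref{def:full-set}. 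Each check is an implication test $\Sigma\cup\Lambda\models\varphi$ in which, treating every $L$-prefixed subformula as a fresh atom, all premises are $\CloneV$-formulae, i.e.\ purely positive clauses, together with the positive and negative unit literals contributed by $\Lambda$. Such a formula is dual Horn, so the implication problem for $\CloneV$ is decidable in polynomial time \cite{bemethvo08imp}; the whole verification therefore runs in polynomial time and $\EXP(B)\in\NP$.

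For $\NP$-hardness I would reduce from the satisfiability problem for a formula $\phi=C_1\land\cdots\land C_k$ in conjunctive normal form over variables $x_1,\dots,x_n$. As in the monotone case, introduce for each $x_i$ a primed copy $x_i'$ standing for $\neg x_i$ and install the guessing gadget $\{x_i\lor Lx_i',\ Lx_i\lor x_i'\mid 1\le i\le n\}$, which, exactly as in Proposition~\ref{prop:exp-M}, forces every $\Sigma$-full set $\Lambda$ to contain precisely one of $Lx_i$, $Lx_i'$; this pins a truth assignment $\sigma$ with $\sigma(x_i)=\true$ iff $Lx_i\in\Lambda$, and yields $\Sigma\cup\Lambda\models x_i$ iff $\sigma(x_i)=\true$ and $\Sigma\cup\Lambda\models x_i'$ iff $\sigma(x_i)=\false$. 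For each clause $C_j$ let $C_j'$ be the positive $\{\lor\}$-clause obtained by replacing $x_i$ by $x_i$ and $\neg x_i$ by $x_i'$; note $C_j'\in\allAutoepistemicFormulae(\{\lor\})$. I then set
\[
  \Sigma:=\{LC_j'\mid 1\le j\le k\}\cup\{x_i\lor Lx_i',\ Lx_i\lor x_i'\mid 1\le i\le n\}\subseteq\allAutoepistemicFormulae(\{\lor\}).
\]

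I would then prove that $\phi$ is satisfiable iff $\Sigma$ has a stable expansion. Since $\Sigma\cup\SF^L(\Sigma)$ is consistent (set every atom to $\true$), Lemma~\ref{lem:inconsistent-expansion} rules out the inconsistent expansion, so a stable expansion corresponds to a genuine $\Sigma$-full set $\Lambda$. As $LC_j'\in\Sigma$, any full set must contain $LC_j'$ (otherwise $\neg LC_j'\in\Lambda$ renders $\Sigma\cup\Lambda$ inconsistent, contradicting fullness), so fullness forces $\Sigma\cup\Lambda\models C_j'$ for every $j$. The key point is that, because each $C_j'$ sits frozen beneath an $L$, the only clauses mentioning the propositions $x_i,x_i'$ are the gadget clauses; hence $\Sigma\cup\Lambda\models C_j'$ holds iff some literal of $C_j'$ is already forced true, i.e.\ iff $\sigma\models C_j$. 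Thus a $\Sigma$-full $\Lambda$ exists iff some $\sigma$ satisfies all $C_j$, i.e.\ iff $\phi$ is satisfiable. Since $\lor$ has a small formula over every base of $\CloneV$ \cite{sch05} and constants are free by Lemma~\ref{lem:always-constants}, this is a logspace many-one reduction to $\EXP(B)$.

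The main obstacle, and the reason the construction must differ from Proposition~\ref{prop:exp-M}, is that $\CloneV$ offers no conjunction, so one cannot form the single believed matrix $L\psi'$ used there. The remedy is to believe each clause separately through $LC_j'$; this succeeds precisely because prefixing a clause with $L$ turns its body into an opaque atom, which decouples the derivability requirements $\Sigma\cup\Lambda\models C_j'$ from one another and from the assignment-encoding gadgets. This same decoupling is what lowers the complexity from $\SigmaPtwo$ to $\NP$: unlike the monotone reduction, every propositional variable here is pinned by a gadget, so no universally-quantified ($\co\NP$) implication test survives, and verifying fullness becomes a polynomial-time computation.
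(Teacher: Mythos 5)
Your proposal is correct and follows essentially the same route as the paper: membership in $\NP$ by guessing a candidate full set and verifying the fullness conditions in polynomial time via the tractability of implication for $\CloneV$-formulae \cite{bemethvo08imp}, and hardness by reducing satisfiability of a CNF formula (the paper uses $\ThreeSAT$) with exactly the same construction $\Sigma=\{LC_j'\mid j\}\cup\{x_i\lor Lx_i',\ Lx_i\lor x_i'\mid i\}$, where believing each clause separately compensates for the absence of conjunction. The correctness argument (consistency of $\Sigma\cup\SF^L(\Sigma)$, the gadget forcing exactly one of $Lx_i,Lx_i'$, and the equivalence $\Sigma\cup\Lambda\models C_j'$ iff $\sigma\models C_j$) matches the paper's proof.
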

  
  \begin{proof}
    We first show that $\EXP(B)$ is efficiently verifiable, thus proving membership in $\NP$.
    Given a set $\Sigma \subseteq \allAutoepistemicFormulae$ and a candidate $\Lambda$ for a $\Sigma$-full set,
    substitute $L\varphi$ by the Boolean value assigned to by $\Lambda$ and call the resulting set $\Sigma'$. 
    Note that $\Sigma'$ is still equivalent to a set of disjunctions. 
    Therefore the conditions $\Sigma' \models \varphi$ if $L\varphi \in \Lambda$ and 
    $\Sigma' \not\models \varphi$ if $\neg L\varphi \in \Lambda$ can be verified in polynomial time, 
    for $\IMP(B) \in \P$ \cite{bemethvo08imp}.

    To show $\NP$-hardness, we reduce $\ThreeSAT$ to $\EXP(B)$ as follows.
    Let $\varphi:=\bigwedge_{1 \leq i \leq n} c_i $ with clauses $c_i = \ell_{i1} \lor \ell_{i2} \lor \ell_{i3}$, $1 \leq i \leq n$ be given and 
    let $x_1,\ldots,x_m$ enumerate the propositions occurring in $\varphi$.
    From $\varphi$ we construct the set 
    \[
      \Sigma := \{L c'_i \mid 1 \leq i \leq n\} \cup \{ x_i \lor Lx_i', Lx_i \lor x_i' \mid 1 \leq i \leq m\},
    \]
    where $c'_i=c_i[\neg x_1/x_1',\ldots,\neg x_m/x_m']$ for $1 \leq i \leq n$.
    Analogously to Proposition~\ref{prop:exp-M}, we obtain that for any stable expansion $\Delta$ of $\Sigma$ either $x_i \in \Delta$ or $x_i' \in \Delta$, but not both.
First, suppose that $\Delta$ is a stable expansion of $\Sigma$. 
    It is easily observed that $\Sigma \cup \SF^L(\Sigma)$ is consistent, 
    therefore $\Delta \neq \allAutoepistemicFormulae$.
    Let $\Lambda$ be the kernel of $\Delta$. 
    As $\Delta \neq \allAutoepistemicFormulae$ and $Lc_i' \in \Sigma$ for all $1 \leq i \leq n$, 
    Definition~\ref{def:full-set} implies that $\Sigma \cup \Lambda \models_L c_i'$ and 
    hence $\Sigma \cup \Lambda \models c_i$ for all $1 \leq i \leq n$.
    From this it follows that $\varphi$ is satisfied by the assignment $\sigma$ setting $\sigma(x_i)=\true$ iff $Lx_i \in \Delta$.
    
    Conversely, suppose that $\varphi$ is satisfied by the assignment $\sigma$. 
    Define the set 
    $\Lambda := \{ Lx_i,\neg Lx_i' \mid \sigma(x_i)= \true\} \cup \{ Lx_i',\neg Lx_i \mid \sigma(x_i)= \false\} \cup \{Lc'_i \mid 1 \leq i \leq n\}$. 
    As $\sigma \models c_i$ for any $1 \leq i \leq n$, we obtain that $\Sigma \cup \Lambda \models c'_i$.
    Concluding, $\Lambda$ is a $\Sigma$-full set.
  \end{proof}
  
  Next, we turn to the case $[B\cup\{0,1\}] = \CloneL$. 
  Say that an $L$-prefixed formula is \emph{$L$-atomic} if it is of the form $L\varphi$ for some atomic formula $\varphi$.
  
  \begin{lemma} \label{lem:exp-L-atomic}
    Let $\Sigma \subseteq \allAutoepistemicFormulae(\{\xor,\true\})$. 
    If $\SF^L(\Sigma)$ contains only $L$-atomic formulae, 
    then one can decide in polynomial time whether $\Sigma$ has a stable expansion.
  \end{lemma}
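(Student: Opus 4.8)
The plan is to exploit the bijection between stable expansions and $\Sigma$-full sets (Lemma~\ref{lem:correspondence-exp-fullset}) and to reduce the search for a $\Sigma$-full set to the solvability of a single system of linear equations over $\mathrm{GF}(2)$. Since $\SF^L(\Sigma)$ consists only of $L$-atomic formulae, write $\SF^L(\Sigma)=\{Lp_1,\dots,Lp_k\}$ for propositions $p_1,\dots,p_k$. Treating each $Lp_i$ as a fresh variable $q_i$ and recalling that $\{\xor,\true\}$ generates exactly the linear functions, the requirement that all formulae of $\Sigma$ be true becomes a system of affine equations over $\mathrm{GF}(2)$ in the propositional variables $p_1,\dots,p_n$ together with $q_1,\dots,q_k$; write it as $A_p\,p + A_q\,q = c$. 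By Lemma~\ref{lem:correspondence-exp-fullset}(1), a candidate $\Sigma$-full set $\Lambda$ is nothing but a Boolean assignment $a\in\{0,1\}^k$ to the $q_i$ (with $a_i=1$ meaning $Lp_i\in\Lambda$), so the task is to decide whether some $a$ meets the fullness conditions of Definition~\ref{def:full-set}.

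First I would isolate the inconsistent expansion. By Lemma~\ref{lem:inconsistent-expansion}, $\allAutoepistemicFormulae$ is a stable expansion iff $\Sigma\cup\SF^L(\Sigma)$ is inconsistent, i.e.\ iff the linear system obtained by setting every $q_i$ to $1$ has no solution; this is a single Gaussian-elimination test. For every remaining full set (those with $\Delta\neq\allAutoepistemicFormulae$), the substituted system $A_p\,p=b(a)$, where $b(a):=c+A_q a$, must be consistent, and then fullness reads, for each $i\le k$: $a_i=1$ iff $p_i$ is forced to value $1$ in every solution of $A_p\,p=b(a)$.

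The crucial observation — the step that turns an apparent $2^k$-fold search into something tractable — is that fixing $q=a$ only changes the right-hand side $b(a)$ and never the coefficient matrix $A_p$ acting on the propositions. Hence the set $F:=\{i : e_i \text{ lies in the row space of } A_p\}$ of propositions that are forced to some value (whenever the system is consistent) is independent of $a$, and for $i\in F$ the forced value equals $\langle\rho_i,b(a)\rangle$ for a fixed combination $\rho_i$ expressing $e_i$ as a combination of rows of $A_p$; since $b(a)=c+A_q a$ is affine in $a$, this value is an affine function $f_i(a)=\alpha_i+\langle\beta_i,a\rangle$. The fullness condition therefore becomes a purely linear system in the unknowns $a_1,\dots,a_k$: for $i\le k$ with $i\notin F$ we require $a_i=0$, and for $i\le k$ with $i\in F$ we require $a_i = \alpha_i+\langle\beta_i,a\rangle$. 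To these I adjoin the consistency constraints on $a$, namely $\langle y,b(a)\rangle=0$ for $y$ ranging over a basis of the left null space of $A_p$, which are again affine in $a$. A stable expansion with $\Delta\neq\allAutoepistemicFormulae$ exists iff this combined $\mathrm{GF}(2)$-system in $a$ is solvable, which is decidable by Gaussian elimination.

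Putting the pieces together, $\Sigma$ has a stable expansion iff either the all-ones assignment witnesses inconsistency (the $\allAutoepistemicFormulae$ case) or the linear system in $a$ above is solvable; both tests run in polynomial time. The main obstacle I anticipate is justifying the $a$-independence of the forced set $F$ together with a clean treatment of the consistent/inconsistent dichotomy — in particular, ensuring that the derived equations capture exactly the $\Sigma$-full sets and no spurious ones. This hinges on the standard fact that a variable is entailed by a \emph{consistent} affine system precisely when its unit vector lies in the row space, with the forced value read off from the right-hand side; once this is in place, correctness follows directly from Definition~\ref{def:full-set} and Lemma~\ref{lem:correspondence-exp-fullset}.
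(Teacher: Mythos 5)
Your proof is correct and follows essentially the same route as the paper's: both treat $\Sigma$ as an affine system over $\mathrm{GF}(2)$ with the $L$-atoms as extra unknowns, isolate the inconsistent expansion via the all-ones substitution, observe that the set of propositions forced by $\Sigma\cup\Lambda$ does not depend on $\Lambda$ (your row-space criterion is exactly what the paper's first Gaussian elimination computes), and reduce fullness to the solvability of an affine fixed-point system in the $L$-variables, decided by a second Gaussian elimination. The difference is purely presentational: the paper triangularizes explicitly and reads off the non-derivable variables and leftover consistency equations, while you phrase the same facts via row-space membership and the left null space.
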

  \begin{proof}
    The idea is to use Gaussian elimination twice.
    Let $\Sigma$ be as required and suppose that $\Sigma$ consists of $m$ formulas. 
    Then the set $\Sigma$ can be seen as a system of linear equations and 
    thus written as $A\mathbf{x}=B\mathbf{y}+C$, where $\mathbf{x}=(x_1, \ldots, x_n)^\mathrm{T}$, 
    $\mathbf{y}=(Lx_1, \ldots, Lx_n)^\mathrm{T}$,   $A$ and $B$ are Boolean matrices having $m$ rows, and $C$
    is a Boolean vector.

    By applying Gaussian elimination to $A$ we obtain an equivalent system 
    $A'\mathbf{x}=B'\mathbf{y}+C'$ with an upper triangular matrix $A'$.
    Let $r$ denote the number of free variables in $A'\mathbf{x}$ and suppose w.l.o.g.\ that these are $x_1, \ldots, x_r$.
    By subsequently eliminating the variables $x_{r+1},\ldots,x_n$, 
    we arrive at a system $T$ equivalent to $\Sigma$ of the form:
    \begin{align*}
      \displaystyle
      \{x_i = f_i(x_1,\ldots, x_r) + g_i(Lx_1, \ldots, Lx_n)+c_i &\mid  r < i \leq n\} \; \cup \\
      \{0=g_i(Lx_1, \ldots, Lx_n)+c_i &\mid  n < i \leq m+r\},
    \end{align*}
    where for each $i$ the functions $f_i$ and $g_i$ are linear,  and $c_i$ is the constant  $0$ or $1$.
    
    Observe that $\Sigma\cup \SF^L(\Sigma)$ is inconsistent iff $T[Lx_1/1,\ldots Lx_n/1]$ has no solution. 
    In this case $\Sigma$ has $\allAutoepistemicFormulae$ as a stable expansion. 
    Let us now show how to construct a $\Sigma$-full set $\Lambda$ such that $\SE(\Lambda)\ne\allAutoepistemicFormulae$.
    
    Since the variables $x_1, \ldots, x_r$ are free, 
    they cannot be derived from $\Sigma\cup\Lambda$ whatever $\Lambda$ is. 
    The same occurs for every $i\geq r+1$ such that $f_i(x_1,\ldots, x_r)$ is not a constant function. 
    Suppose this is the case for $r+1 \leq i \leq s$.
    Then any $\Sigma$-full set has to contain $\neg Lx_j$ for $1 \leq j \leq s$. 
    Let $T'$ be the system obtained by considering all remaining equations while replacing $Lx_i$ with $0$ for each $1 \leq i \leq s$.
    For each equation in $T'$, the function $f_i$ (if present) is a constant function $\varepsilon_i$.
    Therefore $T'$ consists of the following equations:
    \begin{align*}\ \ \ 
      \{x_i = \varepsilon_i + g_i'(Lx_{s+1}, \ldots, Lx_n)+c_i &\mid  s < i \leq n\} \; \cup \\
      \{0=g_i'(Lx_{s+1}, \ldots, Lx_n)+c_i &\mid  n < i \leq m+r\}
    \end{align*}
    with $g_i'(Lx_{s+1}, \ldots, Lx_n) := g_i(0,\ldots,0,Lx_{s+1}, \ldots,$ $Lx_n)$ for $s < i \leq m+r$.
Thus, for every $\Lambda\subseteq \SF^L(\Sigma)\cup \neg\SF^L(\Sigma)$ such that 
$\{\neg Lx_1,\ldots, \neg Lx_s\}\subseteq\Lambda$ and every $i$, $\Sigma\cup\Lambda\models x_i$ (resp., $\Sigma\cup\Lambda\not\models x_i$) if and only if  $T'\cup\Lambda\models x_i$ (resp., $T'\cup\Lambda\not\models x_i$).

    \begin{claim} 
      Let $I$ and $J$ form a partition of $\{s+1,\ldots, n\}$.
      Then $(Lx_{s+1},\ldots,Lx_n)$ with $Lx_i = 0$ if $i \in I$ and $Lx_j = 1$ if $j \in J$
      is a solution of the system $T'[x_{s+1}/Lx_{s+1}, \ldots , x_{n}/Lx_{n}]$
      if and only if  
      $\Lambda=\{\neg Lx_1,\ldots, \neg Lx_s\} \cup \{\neg Lx_i \mid i\in I\} \cup \{Lx_j \mid j\in J\}$ 
      is a $\Sigma$-full set.
    \end{claim}
    
    \begin{proof}[Claim]
\def\endproof{{%
  \unskip\nobreak\hfil\penalty50%
 \hskip1em\hbox{}\nobreak$\triangle$%
 \parfillskip=0pt\par\endtrivlist\addpenalty{-100}%
}} 
    To prove the claim,
    let $\Lambda = \{\neg Lx_1,\ldots, \neg Lx_s\} \cup \{\neg Lx_i\mid i\in I\} \cup \{Lx_j\mid j\in J\}$ be a $\Sigma$-full set.
    Observe that $\Sigma\cup \Lambda$ is consistent and that either  $T'\cup \Lambda \models x_i$ or $T'\cup \Lambda \models \neg x_i$.
    Denote by $\lambda$ the truth assignment induced by $\Lambda$ on $\SF^L(\Sigma)$.
    Then, for every $i > s$, 
      $Lx_i\in\Lambda$ iff
      $\lambda(Lx_i)=1$ iff
      $T'\cup \Lambda\models x_i$ iff 
      $\varepsilon_i+g_i'\big(\lambda(Lx_{s+1}), \ldots, \lambda(Lx_n)\big)+c_i=1$; 
    and 
      $\neg Lx_i\in\Lambda$ iff
      $\lambda(Lx_i)=0$ iff 
      $T'\cup \Lambda\models\neg  x_i$ iff 
      $\varepsilon_i+g_i'\big(\lambda(Lx_{s+1}), \ldots, \lambda(Lx_n)\big)+c_i=0$.
    This means that for every $i$, we have $\varepsilon_i+g_i'\big(\lambda(Lx_{s+1}), \ldots, \lambda(Lx_n)\big)+c_i=\lambda(Lx_i)$.
    Therefore $\lambda$ is a solution of the system $\{Lx_i=\varepsilon_i+g_i(0, \ldots, 0,$ $Lx_{s+1}, \ldots, Lx_n) +c_i \mid s < i \leq n \}$, and hence of the system $T'[x_{s+1}/Lx_{s+1}, \ldots , x_{n}/Lx_{n}]$.
    
    Conversely, suppose that $\lambda$ is a solution of $T'[x_{s+1}/Lx_{s+1}, \ldots , x_{n}/Lx_{n}]$. 
    In particular, $\lambda$ satisfies $\lambda(Lx_i)=\varepsilon_i+g_i'\big(\lambda(Lx_{s+1}), \ldots, \lambda(Lx_n)\big) +c_i \mid s+1 \leq i \leq n\}$. 
    Set $\Lambda:=\{\neg Lx_1,\ldots, \neg Lx_s\} \cup \{\neg Lx_i \mid s+1\leq i\leq n, \lambda(x_i)=0 \} \cup \{ Lx_i\mid s+1\leq i\leq n, \lambda(x_i)=1 \}$.
    Then $T'\cup \Lambda$ is equivalent to $\{Lx_i=\varepsilon_i+g_i'\big(\lambda(Lx_{s+1}), \ldots, \lambda(Lx_n)\big) + c_i \mid s < i \leq n\} \cup \{0=g_i'\big(\lambda(Lx_{s+1}),  \ldots, \lambda(Lx_n)\big) + c_i\mid  n < i \leq m+r\} $. Therefore $T'\cup \Lambda\models x_i$ iff  $\lambda(Lx_i)=1$ and $T'\cup \Lambda\models  \neg x_i$ iff  $\lambda(Lx_i)=0$.
    Hence, $\Lambda$ is a $\Sigma$-full set. This proves the claim.
  \end{proof}
    
    We conclude that $\Sigma$ has a stable expansion iff $T'[x_{s+1}/Lx_{s+1}, \ldots , x_{n}/Lx_{n}]$ has a solution.
  \end{proof}
  
  Note that solving this last system by Gaussian elimination also gives the total number of possible $\Sigma$-full sets:
  the number of consistent stable expansions is equal to the number of solutions of $T'[x_{s+1}/Lx_{s+1}, \ldots , x_{n}/Lx_{n}]$; 
  (while testing for the inconsistent stable expansion can also be accomplished in polynomial-time as seen at the beginning of the proof).

  \begin{proposition} \label{prop:exp-L}
    Let $B$ be a finite set of Boolean functions such that $[B\cup\{0,1\}]=\CloneL$.
    Then $\EXP(B)$ is $\parityL$-hard and contained in $\P$.
  \end{proposition}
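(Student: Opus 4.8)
The plan is to prove the two halves of the statement separately: first membership in $\P$, then $\parityL$-hardness.

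\medskip\noindent\emph{Membership in $\P$.} By Lemma~\ref{lem:always-constants} I may assume the Boolean constants to be available, so that $\Sigma$ is (equivalent to) a set of $\allAutoepistemicFormulae(\{\xor,\true\})$-formulae. Lemma~\ref{lem:exp-L-atomic} already settles the case in which $\SF^L(\Sigma)$ consists of $L$-atomic formulae only, so the remaining task is to reduce the general case to that one by flattening. Proceeding from the innermost $L$-prefixed subformulae outward, for every non-atomic $L\psi\in\SF^L(\Sigma)$ I introduce a fresh proposition $p_\psi$, replace every occurrence of $L\psi$ by $Lp_\psi$, and add the defining premise $p_\psi\leftrightarrow\psi'$, where $\psi'$ is $\psi$ after its own inner $L$-subformulae have been renamed. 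Since $x\leftrightarrow y\equiv x\xor y\xor\true$ is linear, the resulting set $\Sigma'$ is again in $\allAutoepistemicFormulae(\{\xor,\true\})$, contains only $L$-atomic $L$-subformulae, and is only polynomially larger than $\Sigma$. The point to verify is that the stable expansions of $\Sigma$ and $\Sigma'$ are in one-to-one correspondence: because each $p_\psi\leftrightarrow\psi'$ is a hard premise, every model of $\Sigma'\cup\Lambda'$ satisfies it, so $\Sigma'\cup\Lambda'\models p_\psi$ iff $\Sigma'\cup\Lambda'\models\psi'$, and $\Sigma'$ is a definitional (hence conservative) extension of $\Sigma$. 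Thus the $\Sigma'$-fullness condition for $Lp_\psi$ is exactly the $\Sigma$-fullness condition for $L\psi$, the map $L\psi\mapsto Lp_\psi$ is a bijection between the respective kernels, and applying Lemma~\ref{lem:exp-L-atomic} to $\Sigma'$ decides expansion existence in polynomial time.

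\medskip\noindent\emph{$\parityL$-hardness.} I would reduce from the solvability problem for systems of linear equations over $\mathrm{GF}(2)$, which is $\parityL$-complete under logspace reductions~\cite{budaheme92}. Given a system $A\mathbf{x}=\mathbf{b}$ in variables $x_1,\dots,x_n$ with $m$ equations, I build
\[
  \Sigma := \{\, Lx_i\leftrightarrow x_i \mid 1\le i\le n\,\}\cup\{\, L\theta_j \mid 1\le j\le m\,\},
\]
where $\theta_j:=\bigl(\bigoplus_{k\,:\,a_{jk}=1} x_k\bigr)\xor b_j\xor\true$ is the linear formula that is true exactly when the $j$-th equation holds. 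Each biconditional $Lx_i\leftrightarrow x_i$ lets the guess ``$Lx_i\in\Lambda$'' freely fix the derived value $v_i$ of $x_i$, while each premise $L\theta_j$ forces $L\theta_j\in\Lambda$: a $\Sigma$-full set containing $\neg L\theta_j$ would make $\Sigma\cup\Lambda$ inconsistent (the atom $L\theta_j$ occurring both as a premise and negated in $\Lambda$) and hence entail $\theta_j$, contradicting $\neg L\theta_j\in\Lambda$ by Lemma~\ref{lem:correspondence-exp-fullset}(1). Consequently $\Sigma\cup\Lambda\models\theta_j$ is forced, which under the determined values $x_i\equiv v_i$ means that $(v_1,\dots,v_n)$ satisfies the $j$-th equation. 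Hence the $\Sigma$-full sets are in bijection with the solutions of $A\mathbf{x}=\mathbf{b}$.

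\medskip\noindent It then remains to exclude the inconsistent expansion, and this is the delicate point I would stress. By Lemma~\ref{lem:inconsistent-expansion}, $\allAutoepistemicFormulae$ is a stable expansion iff $\Sigma\cup\SF^L(\Sigma)$ is inconsistent; but in $\Sigma\cup\SF^L(\Sigma)$ the equations occur only behind $L$, i.e.\ as the opaque atoms $L\theta_j$, so no $\theta_j$ is actually asserted and the set is satisfiable (assign $\true$ to every $x_i$ and to every $L$-atom). Thus there is no inconsistent expansion, and by Lemma~\ref{lem:correspondence-exp-fullset} the stable expansions are precisely those corresponding to solutions; therefore $\Sigma$ has a stable expansion iff the system is solvable. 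Finally, $\Sigma$ uses only $\xor$, $\leftrightarrow$, and the constants, all of which admit small defining formulae over any base of $\CloneL$ (cf.\ the discussion preceding Lemma~\ref{lem:always-constants}), so after substituting these the construction is a logspace many-one reduction into $\EXP(B)$. The two steps I expect to require the most care are the correctness of the flattening in the membership part and, in the hardness part, the consistency check of $\Sigma\cup\SF^L(\Sigma)$ that rules out the trivially-existing inconsistent expansion.
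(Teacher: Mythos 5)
Your proof is correct, and it splits into two halves of different character. The membership-in-$\P$ half is essentially the paper's own argument: the paper likewise flattens non-$L$-atomic belief subformulae innermost-first by introducing fresh variables $y_\varphi$, adding the linear defining premises $y_\varphi \equiv \varphi$, replacing $L\varphi$ by $Ly_\varphi$, proving a one-to-one correspondence of full sets step by step, and then invoking Lemma~\ref{lem:exp-L-atomic}; your ``conservative extension'' phrasing is a slightly more compressed version of the same induction. The hardness half, however, takes a genuinely different route. The paper reduces from the propositional implication problem $\IMP(B)$, whose $\parityL$-completeness for $[B\cup\{0,1\}]=\CloneL$ it imports from \cite{bemethvo08imp}, via the one-line map $(\Gamma,\psi)\mapsto\Gamma\cup\{L\psi\}$, for which $\{L\psi\}$ is the only candidate full set. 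You instead reduce directly from solvability of linear systems over $\mathrm{GF}(2)$ (citing \cite{budaheme92}), re-using Gottlob's gadget $Lx_i\leftrightarrow x_i$ to let the kernel guess an assignment and the opaque premises $L\theta_j$ to force entailment of each equation. Your construction is sound: any full set must contain every $L\theta_j$ (otherwise $\Sigma\cup\Lambda$ is inconsistent and hence entails $\theta_j$, violating condition~2 of Definition~\ref{def:full-set} --- that, rather than Lemma~\ref{lem:correspondence-exp-fullset}(1), is the precise reference), the biconditionals then pin each $x_i$ to its guessed value, and fullness for $L\theta_j$ holds exactly when the guess solves equation $j$; your explicit consistency check of $\Sigma\cup\SF^L(\Sigma)$ correctly rules out the inconsistent expansion, although this is already subsumed by the bijection of Lemma~\ref{lem:correspondence-exp-fullset}(2). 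As for what each approach buys: the paper's reduction is shorter and leverages an existing completeness result, while yours is self-contained modulo \cite{budaheme92} and yields a bijection between full sets and solutions of the system --- a parsimonious correspondence that dovetails with the paper's remark after Lemma~\ref{lem:exp-L-atomic} that counting consistent expansions in the linear case amounts to counting solutions of a linear system (cf.\ Theorem~\ref{thm:count}).
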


  \begin{proof}
    Let $B$ be as required and $\Sigma$ be a set of autoepistemic $B$-formulae.
    Then $\Sigma$ can be written in polynomial time as a set $\big\{c_k \xor \bigoplus_{i\in I_k} x_i \,\big|\, k \in \N, c_k \in \{\false,\true\} \big\}$ (see, \emph{e.g.}, \cite{bemethvo08imp}).
 
    We transform this set to $\Sigma'$ as follows:
    introduce a fresh variable $y_\phi$ for every non-atomic formula $\phi$ such that $L\phi\in\Sigma$;
    add the equations $y_\phi \equiv \phi$; and replace all occurrences of $L\phi$ by $Ly_\phi$. 
    We claim that the $\Sigma$-full sets and the $\Sigma'$-full sets are in one-to-one correspondence.
    This establishes the upper bound, because $\Sigma'$ satisfies the conditions of Lemma \ref{lem:exp-L-atomic}.

     To prove the claim, let $\Lambda\subseteq\SF^L(\Sigma)\cup \neg \SF^L(\Sigma)$. We give an inductive argument on the number of non-$L$-atomic formulae in $\Sigma$.
    To this end, choose an $L\varphi \in \SF^L(\Sigma)$ such that $\varphi$  does not contain $L$-prefixed subformulae.
    Define 
    \begin{align*}
      \Sigma_\varphi  &:= \Sigma[L\varphi/Ly_\varphi] \cup \{\varphi \equiv y_\varphi\}, \\
      \Lambda_\varphi &:= (\Lambda \setminus \{L\varphi, \neg L\varphi\}) \cup \{Ly_\varphi \mid L\varphi \in \Lambda\} \cup \{\neg Ly_\varphi \mid \neg L\varphi \in \Lambda\}.
    \end{align*}
    That is, $\Sigma_\varphi$ differs from $\Sigma$ in that we substituted one non-$L$-atomic subformula.

    Observe that $\Sigma \cup \Lambda \models \varphi $ if and only if $\Sigma_\varphi \cup \Lambda_\varphi \models y_\varphi$. 
    Therefore, since $L\varphi \in \Lambda$ if and only if $Ly_\varphi\in \Lambda_\varphi$, and 
    $\neg L\varphi \in \Lambda$ if and only if $\neg Ly_\varphi\in \Lambda_\varphi$, 
    it holds that $\Lambda$ is $\Sigma$-full if and only if $\Lambda_\varphi$ is $\Sigma_\varphi$-full.
    Repeating the above argument eventually yields $\Sigma'$, for which the existence of stable expansions 
    can be tested in polynomial time by Lemma~\ref{lem:exp-L-atomic}.
 
    It hence remains to establish $\parityL$-hardness.
    We give a reduction from $\IMP(B)$ for $[B\cup\{0,1\}]=\CloneL$,
    \emph{i.e.}, the problem to decide whether $\Gamma \models \psi$ for a given set $\Gamma$ of $B$-formulae and a given $B$-formula $\psi$.
    Since $\IMP(B)$ is $\parityL$-complete in this case, the proposition follows.
    For an instance $(\Gamma,\psi)$ of $\IMP(B)$, let $\Sigma:=\Gamma\cup \{L\psi\}$.
    Indeed, if $\Gamma \models \psi$, then $\Lambda:= \{L\psi\}$ is $\Sigma$-full; 
    and if $\Lambda:= \{L\psi\}$ is $\Sigma$-full, then $\Gamma \models \psi$.
    Thus, $\IMP(B) \leqcd \EXP(B)$ via the mapping $(\Gamma,\psi) \mapsto \Sigma$.
  \end{proof}
  
  \begin{proposition}\label{prop:exp-EN}
    Let $B$ be a finite set of Boolean functions such that $[B] \subseteq \CloneN$ or $[B] \subseteq \CloneE$.
    Then $\EXP(B)$ is solvable in $\L$. 
    It moreover holds that, for every set $\Sigma \subseteq \allAutoepistemicFormulae(B)$, there is at most one consistent stable expansion.
  \end{proposition}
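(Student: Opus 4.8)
The plan is to exploit the extreme syntactic poverty of these two clones. By Lemma~\ref{lem:always-constants} I may assume the constants $\false,\true$ are available, so it suffices to treat the two canonical cases $[B]=\CloneE$, where every $B$-formula is (equivalent to) a conjunction of atoms or a constant, and $[B]=\CloneN$, where every $B$-formula is a single literal or a constant; here an \emph{atom} means a proposition or a maximal $L$-prefixed subformula. The decisive structural feature is that neither clone can express a genuine disjunction, so the consequence test $\Sigma\cup\Lambda\models\varphi$ never branches: I will show it reduces to evaluating $\varphi$ under a single fixed assignment, and use this to pin down the kernel $\Lambda$ uniquely.

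First I would establish the \emph{moreover} part, as it also drives the algorithm. Fix $\Sigma$ and suppose $\Sigma\cup\Lambda$ is consistent for a $\Sigma$-full set $\Lambda$. Treating each maximal $L$-subformula as an atom, the models of $\Sigma\cup\Lambda$ are exactly the assignments fixing every $L$-atom to its $\Lambda$-value and making every proposition that is forced by a premise of $\Sigma$ true, the remaining atoms being free. Since $\varphi$ is a conjunction of atoms (resp.\ a literal), hence monotone (resp.\ unary), $\Sigma\cup\Lambda\models\varphi$ holds iff $\varphi$ is true in the single least such model, i.e.\ iff each atom occurring in $\varphi$ that is a proposition is forced by $\Sigma$ and each such atom of the form $L\psi$ satisfies $L\psi\in\Lambda$. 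Crucially, every such $L\psi$ is a \emph{proper} subformula of $L\varphi$. Thus the defining conditions of Definition~\ref{def:full-set} form a system that is triangular with respect to the subformula order on $\SF^L(\Sigma)$, and therefore has at most one solution $\Lambda$ with $\Sigma\cup\Lambda$ consistent. By Lemma~\ref{lem:correspondence-exp-fullset} this yields at most one consistent stable expansion.

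For the complexity bound I would turn this uniqueness into a direct logspace procedure. Solving the triangular system amounts to computing, bottom-up along the subformula tree, the truth value of each $L\varphi$; since each step only tests whether a proposition occurs as a forced conjunct (resp.\ literal) of $\Sigma$ — a syntactic scan — and then conjoins (resp.\ evaluates) already-computed values, the whole computation is an instance of the Boolean formula value problem restricted to monotone (resp.\ unary) formulae, which lies in $\L$. Storing $\Lambda$ explicitly would cost linear space, so the required values are recomputed on demand; closure of $\L$ under this kind of composition keeps the procedure in logspace. Finally, $\Sigma$ has a stable expansion iff either this candidate $\Lambda$ is consistent with $\Sigma$ (a consistent expansion) or $\Sigma\cup\SF^L(\Sigma)$ is inconsistent (the inconsistent expansion $\allAutoepistemicFormulae$, by Lemma~\ref{lem:inconsistent-expansion}); both conditions are consistency checks for $\allAutoepistemicFormulae(B)$-formulae and hence decidable in $\L$ for these clones~\cite{bemethvo08imp}.

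The main obstacle I anticipate is twofold. First, the reduction of $\Sigma\cup\Lambda\models\varphi$ to evaluation in a single least model is valid only while $\Sigma\cup\Lambda$ is consistent, so I must keep the consistent candidate cleanly separated from the inconsistent expansion and argue that inconsistency can arise only in the controlled ways permitted here (an explicit $\false$, or $\Sigma$ asserting an $L$-atom that the full condition forces false). Second, the logspace claim hinges on evaluating formulae of possibly linear depth without ever materialising $\Lambda$, which I will justify via membership of the Boolean formula value problem, and of reachability in the subformula tree, in $\L$. The $\CloneN$ case needs only a small extra check that admitting $\neg$ creates no alternative solutions; since every formula still depends on at most one atom, the triangular structure, and hence uniqueness, is unaffected.
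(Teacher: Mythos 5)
Your proposal is correct, and it rests on the same structural insight as the paper's proof: since neither $\CloneE$ nor $\CloneN$ can express disjunction, entailment from $\Sigma\cup\Lambda$ is purely syntactic, so the kernel of any consistent expansion is forced and all the required tests are logspace. What you do differently is to make this precise as a well-founded (``triangular'') recursion over all of $\SF^L(\Sigma)$ ordered by the subformula relation, to treat the inconsistent expansion separately via Lemma~\ref{lem:inconsistent-expansion}, and to implement the bottom-up evaluation through the Boolean formula value problem in $\L$ rather than by naive recursion (whose stack would be linear). The paper's own proof is a two-sentence sketch that inspects only top-level premises ($\varphi$ must lie in $\Sigma$ whenever $L\varphi\in\Sigma$, and must not whenever $\neg L\varphi\in\Sigma$ or $L\neg\varphi\in\Sigma$) and settles $\CloneE$ by ``$\Sigma\equiv\bigwedge\Sigma$''; your recursion is what actually makes the argument sound in the presence of nested $L$-operators---for instance, for $\Sigma=\{\neg L\neg Lp\}$ the forced kernel $\{\neg Lp,\,L\neg Lp\}$ is inconsistent with $\Sigma$, so only the inconsistent expansion $\allAutoepistemicFormulae$ survives, a case the flat criterion misses but your final test via Lemma~\ref{lem:inconsistent-expansion} catches. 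Two blemishes, both harmless: the ``least model'' description of entailment is not literally correct for negative literals in the $\CloneN$ case (a consistent set of literals entails exactly the literals it contains, not everything true in some minimum model), which you flag and repair yourself; and ``a syntactic scan'' understates the constant propagation needed to decide whether an atom is a live conjunct of a premise, though that too is tree evaluation in $\L$.
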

  
  \begin{proof}
    Let $B$ be a finite set of Boolean functions such that $[B] \subseteq \CloneN$ and 
    let $\Sigma \subseteq \allAutoepistemicFormulae(B)$ be given.
    For $\Sigma$ to have a consistent stable expansion, 
    $\varphi$ has to be in $\Sigma$ for all $L\varphi \in \Sigma$, while
    $\varphi$ must not to be in $\Sigma$ for all $\neg L\varphi \in \Sigma$ or $L\neg\varphi \in \Sigma$.
    As $\Sigma \equiv \bigwedge \Sigma$, the result for $[B] \subseteq \CloneE$ follows from the above.
  \end{proof}

  
  
  The proof of Theorem \ref{thm:ext} now immediately follows from Propositions~\ref{prop:exp-M}--\ref{prop:exp-EN}. 
  Note that by Lemma~\ref{lem:always-constants} and the discussion following that lemma, this covers all cases and, hence, 
  Theorem \ref{thm:ext} gives a complete classification.
  
  From this theorem and its proof one can easily settle the complexity of the existence of a consistent stable expansion 
  as well as the complexity of the brave and cautious reasoning. 
  
  \begin{corollary}\label{cor:consistent-exp}
    For all finite sets $B$ of Boolean functions,
    the complexity of the problem to decide whether a set of 
    autoepistemic $B$-formulae has a consistent stable expansion is the same as 
    for the problem to decide the existence of a stable expansion.
  \end{corollary}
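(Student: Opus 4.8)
The plan is to match the complexity of the consistent-expansion problem, which I will write $\EXP_c(B)$, against that of $\EXP(B)$ case by case along the classification of Theorem~\ref{thm:ext}, using Lemma~\ref{lem:inconsistent-expansion} to keep track of the single inconsistent expansion $\allAutoepistemicFormulae$. The guiding observation is that, by Lemmas~\ref{lem:eqv_membership_SE} and~\ref{lem:inconsistent-expansion}, a $\Sigma$-full set $\Lambda$ yields a consistent stable expansion if and only if $\Sigma \cup \Lambda$ is consistent (equivalently $\SE(\Lambda) \neq \allAutoepistemicFormulae$, equivalently $\Lambda \neq \SF^L(\Sigma)$). Thus $\EXP_c(B)$ is obtained from $\EXP(B)$ by restricting attention to full sets surviving one extra satisfiability test, and dually the hardness reductions for $\EXP(B)$ will transfer whenever they already avoid the inconsistent expansion.

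For the upper bounds I would augment the membership algorithms underlying Theorem~\ref{thm:ext} by this single consistency test on $\Sigma\cup\Lambda$. When $[B\cup\{0,1\}]\in\{\CloneBF,\CloneM\}$, the nondeterministic guess-and-check for a $\Sigma$-full set already spends a (co-)$\NP$ verification, so adding an $\NP$ satisfiability query for $\Sigma\cup\Lambda$ keeps the problem in $\SigmaPtwo$; for $\CloneV$ this check is polynomial since $\IMP(B)\in\P$, leaving $\EXP_c(B)$ in $\NP$; for $\CloneL$ I would invoke the observation following Lemma~\ref{lem:exp-L-atomic}, by which the number of consistent expansions equals the number of solutions of an explicit $\mathrm{GF}(2)$-system and is computable in polynomial time, so testing this count for positivity places $\EXP_c(B)$ in $\P$; and for $[B]\subseteq\CloneE$ or $[B]\subseteq\CloneN$ nothing has to be done, since Proposition~\ref{prop:exp-EN} is already phrased in terms of the at most one consistent expansion and gives membership in $\L$.

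For the lower bounds the key point is that the hardness reductions of Propositions~\ref{prop:exp-M}, \ref{prop:exp-V}, and \ref{prop:exp-L} already output premise sets for which $\allAutoepistemicFormulae$ is never a stable expansion. Indeed, in each construction $\Sigma\cup\SF^L(\Sigma)$ is consistent on positive source instances (this is verified inside those proofs), so by Lemma~\ref{lem:inconsistent-expansion} every stable expansion of such a $\Sigma$ is consistent, whereas on negative source instances $\Sigma$ has no stable expansion at all and hence none that is consistent. Consequently each reduction sends positive instances to sets possessing a consistent stable expansion and negative instances to sets with no stable expansion whatsoever, so it is simultaneously a reduction to $\EXP_c(B)$; this carries over $\SigmaPtwo$-hardness for $\CloneBF,\CloneM$ and $\NP$-hardness for $\CloneV$ verbatim.

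I expect the only delicate step to be the $\parityL$-hardness in the $\CloneL$ case, where the reduction $(\Gamma,\psi)\mapsto\Gamma\cup\{L\psi\}$ produces $\Sigma\cup\SF^L(\Sigma)=\Gamma\cup\{L\psi\}$, which is consistent precisely when $\Gamma$ is. Hence an instance with $\Gamma\models\psi$ but inconsistent $\Gamma$ is a positive $\IMP(B)$-instance that maps to a set with no consistent expansion, breaking the reduction. To repair this I would restrict the source problem to $\IMP(B)$-instances with consistent $\Gamma$ and argue that this restriction remains $\parityL$-hard, using that solvability of a $\mathrm{GF}(2)$-system lies in $\parityL$; the verification that the restricted implication problem stays hard is the one genuinely nontrivial point and is where I would spend the most care. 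Combining the matching upper and lower bounds in each case, and appealing to Lemma~\ref{lem:always-constants} together with the clone argument to cover every finite $B$, then yields the claimed identity of complexities.
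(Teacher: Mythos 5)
You take essentially the same route as the paper: reuse the hardness reductions of Propositions~\ref{prop:exp-M}, \ref{prop:exp-V} and~\ref{prop:exp-L} after checking that they avoid the inconsistent expansion $\allAutoepistemicFormulae$, and extend each membership algorithm by a single consistency test (an extra $\NP$-oracle query for $\CloneBF$/$\CloneM$, a polynomial-time check for $\CloneV$, Gaussian elimination for $\CloneL$, nothing new for $\CloneE$/$\CloneN$). The one point where you genuinely diverge is the $\parityL$-hardness for $[B\cup\{0,1\}]=\CloneL$, and there your caution uncovers something the paper glosses over: the paper asserts that \emph{all three} hardness proofs showed that the constructed premises never admit $\allAutoepistemicFormulae$ as an expansion, but this was established only in Propositions~\ref{prop:exp-M} and~\ref{prop:exp-V}; for the map $(\Gamma,\psi)\mapsto\Gamma\cup\{L\psi\}$ of Proposition~\ref{prop:exp-L} it indeed fails when $\Gamma$ is inconsistent, exactly as you observe, since then the only expansion of $\Gamma\cup\{L\psi\}$ is $\allAutoepistemicFormulae$ although $\Gamma\models\psi$ holds trivially.

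Your repair --- restricting $\IMP(B)$ to consistent premise sets --- is the right one, and the step you flag as ``genuinely nontrivial'' is easier than you fear. Deciding whether a vector $b$ lies in the span of given vectors $a_1,\ldots,a_m$ over $\mathrm{GF}(2)$ is $\parityL$-complete: it is precisely the question whether the linear system $\sum_{i} y_ia_i=b$ in the unknowns $y_i$ is solvable, and solvability of linear systems over $\mathrm{GF}(2)$ is a standard $\parityL$-complete problem (cf.~\cite{budaheme92}). Encoding the premises as the homogeneous system $\{a_i\cdot x=0 \mid 1\leq i\leq m\}$ and the conclusion as $b\cdot x=0$ produces $\IMP(B)$-instances whose premise set is always consistent (the all-zero assignment is a model) and which are positive iff $b$ lies in the span of the $a_i$; feeding these into the reduction of Proposition~\ref{prop:exp-L} then gives $\parityL$-hardness of $\EXP^*(B)$, completing your argument. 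One last, harmless slip: your parenthetical claim that $\SE(\Lambda)$ is consistent ``equivalently $\Lambda\neq\SF^L(\Sigma)$'' is wrong --- for $\Sigma=\{p,Lp\}$ the unique expansion is consistent, yet its kernel is $\{Lp\}=\SF^L(\Sigma)$ --- but your proof only ever uses the correct criterion, namely consistency of $\Sigma\cup\Lambda$.
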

  \begin{proof}
    The corollary follows immediately from the proof of Theorem \ref{thm:ext}. Indeed, in each hardness proof (see Propositions \ref{prop:exp-M},  \ref{prop:exp-V} and \ref{prop:exp-L}) we have shown that the set of $B$-premises constructed in that proof, $\Sigma$ or $\Sigma'$, does not admit $ \allAutoepistemicFormulae$ as a stable expansion. Therefore, $\Sigma$ or $\Sigma'$ has a stable expansion iff it has a consistent stable expansion. This proves all the hardness results. 
    As for the upper bounds, Propositions~\ref{prop:exp-M} and \ref{prop:exp-V} are easily seen to extend to the existence of a consistent stable expansion.
    For the tractable cases $[B]\subseteq\CloneE$ and $[B]\subseteq\CloneN$, one can decide the existence of a consistent stable expansion in logarithmic space. This follows from the proof of Proposition \ref{prop:exp-EN}.
    Finally, for $[B]\subseteq \CloneL$, observe that the proof Proposition~\ref{prop:exp-L} actually allows to compute full sets
corresponding to consistent stable expansions in polynomial time.
  \end{proof}
  
\subsection{Brave and Cautious Reasoning}

  \begin{theorem}\label{thm:brave_cautious}
    Let $B$ be a finite set of Boolean functions.
    \begin{itemize}
      \item  If $[B\cup\{0,1\}]$ is $\CloneBF$ or $\CloneM$ then $\MEMB(B)$ is $\SigmaPtwo$-complete, whereas $\MEMC(B)$ is $\PiPtwo$-complete.
      \item If $[B\cup\{0,1\}]$ is $\CloneV$ then $\MEMB(B)$ is  $\NP$-complete, whereas $\MEMC(B)$ is $\co\NP$-complete.
      \item If $[B\cup\{0,1\}]$ is $\CloneL$ then $\MEMB(B)$ and $\MEMC(B)$ are $\parityL$-hard and in $\P$.
      \item If $[B\cup\{0,1\}]$ is $\CloneE$ or $\CloneN$ or $\CloneI$ then $\MEMB(B)$ and $\MEMC(B)$ are in $\L$.
    \end{itemize}
  \end{theorem}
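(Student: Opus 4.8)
The plan is to obtain every bound by relating $\MEMB$ and $\MEMC$ to the expansion existence problem already classified in Theorem~\ref{thm:ext}, and to supply only the missing upper bounds for the small clones by hand. The upper bounds in the cases $\CloneBF$ and $\CloneM$ require nothing new: membership of $\MEMB(B)$ in $\SigmaPtwo$ and of $\MEMC(B)$ in $\PiPtwo$ was argued for arbitrary $B$ in the discussion opening Section~\ref{sect:complexity_results}. By Lemma~\ref{lem:always-constants} it again suffices to treat the seven clones containing the constants, so in all reductions below I may freely use $\true$ and $\false$.

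For the lower bounds on brave reasoning I would observe that $\true$ belongs to every stable expansion, whence $\Sigma$ has a stable expansion if and only if $\true$ lies in \emph{some} stable expansion of $\Sigma$. Thus $\Sigma \mapsto (\Sigma,\true)$ witnesses $\EXP(B) \leqlogm \MEMB(B)$, and Theorem~\ref{thm:ext} immediately yields $\SigmaPtwo$-hardness (cases $\CloneBF,\CloneM$), $\NP$-hardness (case $\CloneV$) and $\parityL$-hardness (case $\CloneL$) of $\MEMB(B)$. For cautious reasoning I would use the dual fact that $\false$ lies in a stable expansion exactly when that expansion is inconsistent (Lemma~\ref{lem:inconsistent-expansion}); hence $\false$ lies in \emph{every} stable expansion of $\Sigma$ iff $\Sigma$ has no consistent stable expansion, so $\Sigma \mapsto (\Sigma,\false)$ reduces the existence of a consistent stable expansion to $\overline{\MEMC(B)}$. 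By Corollary~\ref{cor:consistent-exp} the former problem has the same complexity as $\EXP(B)$; since $\SigmaPtwo$ and $\NP$ are the complements of $\PiPtwo$ and $\co\NP$, and $\parityL$ is closed under complementation, this transfers to $\PiPtwo$-, $\co\NP$- and $\parityL$-hardness of $\MEMC(B)$ in the respective cases.

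For the upper bound when $[B\cup\{0,1\}]=\CloneV$ I would place $\MEMB(B)$ in $\NP$ by guessing a candidate $\Lambda$, verifying in polynomial time that it is $\Sigma$-full and that $\Sigma \cup \Lambda \models_L \varphi$ (equivalently $\varphi \in \SE(\Lambda)$, by Lemma~\ref{lem:eqv_membership_SE}); both checks decompose into polynomially many instances of $\IMP(B)$, which is in $\P$ here. Replacing the last test by $\Sigma \cup \Lambda \not\models_L \varphi$ shows $\overline{\MEMC(B)} \in \NP$, i.e.\ $\MEMC(B) \in \co\NP$. For $\CloneL$ I would reuse the normal form of Proposition~\ref{prop:exp-L} and Lemma~\ref{lem:exp-L-atomic}: after reducing to $L$-atomic premises, the $\Sigma$-full sets are exactly the solutions of a linear system over $\mathrm{GF}(2)$, and for a fixed full set the condition $\Sigma \cup \Lambda \models_L \varphi$ is itself a linear-consequence test. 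Deciding whether \emph{some} (brave) resp.\ \emph{every} (cautious) solution forces $\varphi$ then reduces to a bounded number of Gaussian eliminations, placing both problems in $\P$; the inconsistent expansion, detected as in Lemma~\ref{lem:exp-L-atomic}, contains every formula and is accounted for separately.

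Finally, for $\CloneE$, $\CloneN$ and $\CloneI$ I would invoke Proposition~\ref{prop:exp-EN}: there is at most one consistent stable expansion, and together with Lemma~\ref{lem:inconsistent-expansion} all stable expansions of $\Sigma$ (at most the unique consistent one and the inconsistent one) can be described in logarithmic space. Brave and cautious membership of $\varphi$ then amount to testing $\Sigma \cup \Lambda \models_L \varphi$ against these finitely many expansions, and since the relevant implication problem is in $\L$ for these clones, $\MEMB(B)$ and $\MEMC(B)$ lie in $\L$. I expect the genuinely delicate step to be the $\CloneL$ upper bound: one must argue that quantifying brave or cautious membership over the whole, exponentially large, solution space of the full-set system, while correctly separating the consistent full sets from the inconsistent expansion, can still be carried out by polynomially many linear-algebra computations.
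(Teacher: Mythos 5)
Your lower bounds and most of your upper bounds coincide with the paper's own proof: the reductions $\Sigma \mapsto (\Sigma,\true)$ and $\Sigma \mapsto (\Sigma,\false)$ combined with Theorem~\ref{thm:ext} and Corollary~\ref{cor:consistent-exp} are exactly the paper's hardness argument, the $\SigmaPtwo$/$\PiPtwo$ memberships come from the generic discussion, the $\CloneV$ case is the paper's guess-and-check via $\IMP(B)\in\P$ (the paper spells out the recursion through $\SF^q(\varphi)$ needed to evaluate $\models_L$ on nested $L$'s, which you should too, but the idea is the same), and the $\CloneE$/$\CloneN$/$\CloneI$ case is Proposition~\ref{prop:exp-EN} plus Lemma~\ref{lem:inconsistent-expansion}, as in the paper.

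The genuine gap is the $\CloneL$ upper bound, and you in effect concede it yourself: you assert that deciding whether \emph{some} (resp.\ \emph{every}) solution of the full-set system forces $\varphi$ ``reduces to a bounded number of Gaussian eliminations,'' but this is precisely the statement that needs proof, not a consequence of Lemma~\ref{lem:exp-L-atomic}. The solution space of $T'[x_{s+1}/Lx_{s+1},\ldots,x_n/Lx_n]$ is an affine subspace of $\mathrm{GF}(2)^n$ that may contain exponentially many full sets, and $\varphi$ may contain nested $L$-operators, so the predicate ``$\Sigma\cup\Lambda_\lambda\models_L\varphi$'' is defined by a recursion whose intermediate truth values all depend on the chosen solution $\lambda$. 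To push your route through one would have to prove, by induction on the nesting depth, that on the solution space this predicate is itself an \emph{affine} function of $\lambda$ (using that, over a consistent affine system, the truth value of an implied affine query depends affinely on the parameters), so that brave reasoning becomes solvability of one additional linear equation and cautious reasoning becomes containment of the solution space in an affine hyperplane. That argument is absent from your proposal. The paper sidesteps the whole issue with two short gadget reductions (Lemmas~\ref{lem:memc-to-exp} and~\ref{lem:memb-to-exp}): map $(\Sigma,\varphi)$ to $\Sigma\cup\{L\varphi\xor p\xor\true,\,Lp\}$ with $p$ fresh, which forces $L\varphi$ into any expansion and yields $\MEMB(B)\leqlogm\EXP(B)$; dually, $\Sigma\cup\{L\varphi\xor p,\,Lp\}$ reduces the complement of $\MEMC(B)$ to the existence of a \emph{consistent} stable expansion $\EXP^*(B)$. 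Both target problems are in $\P$ by Proposition~\ref{prop:exp-L} and Corollary~\ref{cor:consistent-exp}, and $\P$ is closed under complement. Either you supply the affineness induction sketched above, or you should simply adopt these reductions; as written, the $\CloneL$ case of your proof is incomplete at its only nontrivial point.
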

  
  To prove Theorem~\ref{thm:brave_cautious}, we require two lemmas that provide upper bounds on the complexity of $\MEMB(B)$ and $\MEMC(B)$ via reduction to the expansion existence problem.
  
  \begin{lemma} \label{lem:memc-to-exp}
    Let $B$ be a finite set of Boolean functions such that $[B]=\CloneL$. 
    Then $\MEMB(B) \leqlogm \EXP(B)$.
  \end{lemma}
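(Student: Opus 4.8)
The plan is to use the simplest possible reduction, namely $f\colon (\Sigma,\varphi) \mapsto \Sigma \cup \{L\varphi\}$, which is clearly computable in logspace and maps into $\allAutoepistemicFormulae(B)$ since the operator $L$ is available for every $B$. The intuition is that brave membership of $\varphi$ is equivalent to membership of $L\varphi$, so that forcing $L\varphi$ as a premise selects exactly those stable expansions of $\Sigma$ that contain $\varphi$. Concretely, I would first record the two introspection properties that follow directly from the fixpoint definition of a stable expansion $\Delta = \theorems{\Sigma \cup L(\Delta) \cup \neg L(\bar{\Delta})}$: if $\varphi \in \Delta$ then $L\varphi \in L(\Delta)$ is among the premises, so $L\varphi \in \Delta$ (positive introspection), and if $\varphi \notin \Delta$ then $\neg L\varphi \in \neg L(\bar{\Delta})$, so $\neg L\varphi \in \Delta$ (negative introspection). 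Hence for any consistent $\Delta$ we have $\varphi \in \Delta \iff L\varphi \in \Delta$.

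For the forward direction, suppose $\Delta$ is a stable expansion of $\Sigma$ with $\varphi \in \Delta$. Then $L\varphi \in L(\Delta)$, so the premise sets coincide, $\Sigma \cup \{L\varphi\} \cup L(\Delta) \cup \neg L(\bar{\Delta}) = \Sigma \cup L(\Delta) \cup \neg L(\bar{\Delta})$, and taking deductive closures shows that $\Delta$ is also a stable expansion of $\Sigma' := \Sigma \cup \{L\varphi\}$. For the converse, let $\Delta'$ be a stable expansion of $\Sigma'$. Since $L\varphi \in \Sigma'$ we have $L\varphi \in \Delta'$; I would then argue $\varphi \in \Delta'$: if $\Delta'$ is consistent this is negative introspection (were $\varphi \notin \Delta'$, then $\neg L\varphi \in \Delta'$, contradicting consistency), and if $\Delta' = \allAutoepistemicFormulae$ it is trivial. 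Once $\varphi \in \Delta'$ is established, $L\varphi \in L(\Delta')$ again makes the extra premise redundant, so $\Delta'$ is a stable expansion of $\Sigma$ that contains $\varphi$. Thus $(\Sigma,\varphi)$ is a yes-instance of $\MEMB(B)$ iff $\Sigma \cup \{L\varphi\}$ is a yes-instance of $\EXP(B)$.

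The step I expect to need the most care is the inconsistent expansion $\Delta' = \allAutoepistemicFormulae$: here one must check that adding $L\varphi$ cannot create a spurious stable expansion. The resolution is that for the inconsistent expansion the premises already contain $L(\allAutoepistemicFormulae)$ and hence $L\varphi$, so $\Sigma \cup L(\allAutoepistemicFormulae)$ and $\Sigma' \cup L(\allAutoepistemicFormulae)$ are literally the same set; equivalently, by Lemma~\ref{lem:inconsistent-expansion}, forcing every $L$-atom to true is unaffected by asserting the additional positive atom $L\varphi$, so $\Sigma \cup \SF^L(\Sigma)$ is inconsistent iff $\Sigma' \cup \SF^L(\Sigma')$ is. Note that the argument is in fact independent of $B$; I invoke it only for $[B]=\CloneL$ because that is the regime in which $\EXP(B) \in \P$ makes the reduction yield the desired upper bound. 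An alternative route through the full-set characterisation (Lemmas~\ref{lem:correspondence-exp-fullset} and~\ref{lem:eqv_membership_SE}) is possible, but it requires extra bookkeeping to match a $\Sigma$-full set against the enlarged $\SF^L(\Sigma')$, so I would prefer the direct fixpoint argument above.
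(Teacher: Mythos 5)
Your proof is correct, and it takes a genuinely different --- and more elementary --- route than the paper's. The paper maps $(\Sigma,\varphi)$ to $\Sigma' := \Sigma \cup \{L\varphi \xor p \xor \true,\, Lp\}$ for a fresh proposition $p$, i.e., it forces $p$ to be equivalent to $L\varphi$ together with the belief $Lp$, and verifies correctness through the kernel/full-set machinery with a case distinction on whether $Lp$ or $\neg Lp$ lies in a putative expansion of $\Sigma'$; this construction genuinely uses the connectives of $\CloneL$. Your map $(\Sigma,\varphi) \mapsto \Sigma \cup \{L\varphi\}$ adds no connectives at all, and your justification works directly with the fixpoint definition: positive introspection ($\varphi \in \Delta$ implies $L\varphi \in \Delta$) makes the extra premise redundant in the forward direction, while $L\varphi \in \Sigma' \subseteq \Delta'$ together with negative introspection (consistent case) or the trivial membership $L\varphi \in L(\allAutoepistemicFormulae)$ (inconsistent case) handles the converse; you also correctly isolate and dispose of the only delicate point, namely that adding $L\varphi$ cannot create a spurious inconsistent expansion. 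In fact your argument shows more: the stable expansions of $\Sigma \cup \{L\varphi\}$ are \emph{exactly} the stable expansions of $\Sigma$ containing $\varphi$. What your approach buys: it is independent of $B$, hence proves $\MEMB(B) \leqlogm \EXP(B)$ for every finite set $B$ (the hypothesis $[B]=\CloneL$ is needed only to combine the reduction with Proposition~\ref{prop:exp-L} to get membership in $\P$), and the induced one-to-one correspondence is parsimonious, so the same reduction would even be reusable in the counting setting of Section~\ref{sect:counting}. What the paper's gadget buys: it is the same template as in the companion Lemma~\ref{lem:memb-to-exp} for cautious reasoning, where one must in effect assert $\neg L\varphi$; since negation need not be available, the fresh proposition together with the $\xor$-clause does that job, and the paper thereby obtains both reductions from one symmetric construction.
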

  
  \begin{proof}
    Let $B$ be a finite set of Boolean functions such that $[B]=\CloneL$.
    Given $\Sigma \subseteq \allAutoepistemicFormulae(B)$ and $\varphi \in \allAutoepistemicFormulae(B)$, map 
    the pair $(\Sigma,\varphi)$ to $\Sigma' := \Sigma \cup \{L\varphi \xor p \xor \true, Lp\}$, where $p$ is a fresh proposition.
    We claim that $\varphi$ is contained in a stable expansion of $\Sigma$ iff $\Sigma' \in \EXP(B)$.
    
    First suppose that $\varphi$ is contained in a stable expansion $\Delta$ of $\Sigma$ and let $\Lambda$ denote its kernel.
    We claim that $\Lambda':=\Lambda \cup \{ L\varphi, Lp\}$ is $\Sigma'$-full: 
    \begin{itemize}
      \item $\Sigma' \cup \Lambda' \models \varphi$, because $\Sigma \cup \Lambda \models_L \varphi$;
      \item $\Sigma' \cup \Lambda' \models p$, because $\Sigma \cup \{L\varphi, L\varphi \xor p \xor \true\} \models p$;
      \item for all $L\psi \in \Lambda$, we have 
      $\Sigma' \cup \Lambda' \equiv \Sigma \cup \Lambda \cup \{ L\varphi, L\varphi \xor p \xor \true, Lp \}\models_L \psi$; 
      whereas for all $\neg L\psi \in \Lambda$, we still have 
      $\Sigma' \cup \Lambda' \equiv \Sigma \cup \Lambda \cup \{ L\varphi, L\varphi \xor p \xor \true, Lp \} \not\models_L \psi$.
    \end{itemize}
    Hence, $\Sigma'$ has a stable expansion.
    
    Conversely, suppose that $\varphi$ is not bravely entailed. Hence $\Sigma$ does not have $\allAutoepistemicFormulae$ 
    as a stable expansion and $\neg L\varphi \in \Delta$ for all stable expansions $\Delta$ of $\Sigma$. Observe that
    $\Sigma'\cup \SF^L(\Sigma')= \Sigma\cup \SF^L(\Sigma)\cup \{L\varphi \xor p \xor \true, Lp\}\cup \{ L\varphi, Lp\}$ is consistent,
    therefore $\allAutoepistemicFormulae$ is not a stable expansion of $\Sigma'$.
    Hence, assume that $\Delta'$ is a consistent stable expansion of $\Sigma'$.
    Then either $Lp \in \Delta'$ or $\neg Lp \in \Delta'$. 
    In the former case, $\Delta'$ would also have to contain $L\varphi$,
    while $\varphi$ can not be derived. A contradiction to $\Delta'$ being a stable expansion of $\Sigma'$.
    In the latter case, we have that $\theorems{\Sigma' \cup L(\Delta') \cup \neg L(\bar{\Delta'})} 
    \supseteq \{\neg Lp, Lp\}$. Thus $\theorems{\Sigma' \cup L(\Delta') \cup \neg L(\bar{\Delta'})} = 
    \allAutoepistemicFormulae \supsetneq \Delta'$;a contradiction to $\Delta'$ being a stable expansion.
    We conclude that $\Sigma'$ does not posses any stable expansions.
  \end{proof}

  \begin{lemma} \label{lem:memb-to-exp}
    Let $B$ be a finite set of Boolean functions such that $[B]=\CloneL$.
    Then $\overline{\MEMC}(B) \leqlogm \EXP^*(B)$, where $\EXP^*(B)$ denotes the problem of deciding the existence of a consistent stable expansion.
  \end{lemma}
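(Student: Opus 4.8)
The plan is to force $\varphi$ out of every \emph{consistent} stable expansion by adjoining the single premise $\neg L\varphi$, which in the linear fragment is written as $L\varphi \xor \true$. Concretely, I would map an instance $(\Sigma,\varphi)$ of $\MEMC(B)$ to $\Sigma' := \Sigma \cup \{L\varphi \xor \true\}$. Since $[B]=\CloneL$ and, by Lemma~\ref{lem:always-constants}, the constants are available, both $\xor$ and $\true$ admit small $B$-formulae, so $\Sigma'$ is a set of $B$-formulae computable from $(\Sigma,\varphi)$ in logarithmic space. The heart of the argument is the claim that the consistent stable expansions of $\Sigma'$ are exactly the stable expansions $\Delta$ of $\Sigma$ with $\varphi \notin \Delta$; since $(\Sigma,\varphi) \in \overline{\MEMC}(B)$ means precisely that such a $\Delta$ exists, this correspondence establishes the reduction.

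To prove the correspondence I would argue directly from the fixed-point definition. For the forward direction, if $\Delta$ is a stable expansion of $\Sigma$ with $\varphi \notin \Delta$, then $\neg L\varphi \in \neg L(\bar{\Delta})$, so adjoining $L\varphi \xor \true$ leaves the generating set $\Sigma \cup L(\Delta) \cup \neg L(\bar{\Delta})$ unchanged; hence $\Delta = \theorems{\Sigma' \cup L(\Delta) \cup \neg L(\bar{\Delta})}$ is a stable expansion of $\Sigma'$, and it is consistent because $\varphi \notin \Delta$ forces $\Delta \neq \allAutoepistemicFormulae$. For the converse, let $\Delta'$ be a consistent stable expansion of $\Sigma'$. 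Since $L\varphi \xor \true \in \Sigma'$ we have $\neg L\varphi \in \Delta'$, and the standard identity $L\varphi \in \Delta' \iff \varphi \in \Delta'$ for consistent expansions (which I would verify from stability: $\varphi \in \Delta'$ forces $L\varphi \in \Delta'$, while $\varphi \notin \Delta'$ forces $\neg L\varphi \in \Delta'$, and having both $L\varphi$ and $\neg L\varphi$ in $\Delta'$ would contradict consistency) then yields $\varphi \notin \Delta'$. Thus $\neg L\varphi \in \neg L(\bar{\Delta'})$, the extra premise is again redundant in the generating set, and $\Delta'$ is a stable expansion of $\Sigma$ omitting $\varphi$.

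The step I expect to require the most care, and the reason the target is $\EXP^*(B)$ rather than $\EXP(B)$, is the treatment of the inconsistent expansion. By Lemma~\ref{lem:inconsistent-expansion}, $\allAutoepistemicFormulae$ is a stable expansion of $\Sigma'$ iff $\Sigma' \cup \SF^L(\Sigma')$ is inconsistent; but this union contains both $L\varphi$ and $\neg L\varphi$, so $\Sigma'$ \emph{always} admits the inconsistent expansion. Consequently $\Sigma' \in \EXP(B)$ holds unconditionally and carries no information, whereas only the consistent expansions distinguish the cases --- exactly what $\EXP^*$ measures. I would finish by checking the boundary cases that the correspondence already subsumes: if $\Sigma$ has no stable expansion, or only the inconsistent one (which contains $\varphi$), then $\varphi$ is cautiously entailed (vacuously in the first case), there is no consistent $\Delta$ with $\varphi \notin \Delta$, and correspondingly $\Sigma'$ has no consistent stable expansion, so the reduction correctly sends these instances to non-instances of $\EXP^*(B)$.
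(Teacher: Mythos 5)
Your proof is correct, but it realizes the reduction differently from the paper. The paper maps $(\Sigma,\varphi)$ to $\Sigma' := \Sigma \cup \{L\varphi \xor p,\, Lp\}$ with a \emph{fresh} proposition $p$: the premise $Lp$ forces $p$ to be believed, and $L\varphi \xor p$ makes $p$ a proxy for $\neg L\varphi$, so every consistent expansion of $\Sigma'$ must exclude $\varphi$; the verification is then carried out in Niemel\"a's full-set framework (the kernel $\Lambda$ of an expansion of $\Sigma$ omitting $\varphi$ is extended to the $\Sigma'$-full set $\Lambda \cup \{\neg L\varphi, Lp\}$, and the converse is a case distinction on whether $Lp$ or $\neg Lp$ lies in $\Delta'$). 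You dispense with the auxiliary proposition, add $\neg L\varphi$ itself (as $L\varphi \xor \true$), and argue directly from the fixed-point definition, obtaining the stronger statement that the consistent stable expansions of $\Sigma'$ are \emph{exactly} the stable expansions of $\Sigma$ omitting $\varphi$: the new premise is semantically redundant in the generating set $\Sigma \cup L(\Delta) \cup \neg L(\bar{\Delta})$ whenever $\varphi \notin \Delta$, while consistency plus deductive closure force $\varphi \notin \Delta'$ in the other direction. This is more elementary than the paper's argument, and the resulting one-to-one correspondence is a bonus (it is even parsimonious); what the paper's gadget buys is essentially only uniformity with its companion Lemma~\ref{lem:memc-to-exp}, where the target is $\EXP(B)$ rather than $\EXP^*(B)$. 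Your closing discussion of why $\EXP^*$ is the right target (your $\Sigma'$ \emph{always} admits the inconsistent expansion, since $\Sigma'\cup\SF^L(\Sigma')$ contains both $L\varphi$ and $\neg L\varphi$) is accurate and welcome; note the paper's construction also needs $\EXP^*$, though for a slightly different reason. One minor point: your appeal to Lemma~\ref{lem:always-constants} is unnecessary and, strictly speaking, that lemma is stated only for $\EXP$, $\MEMB$, $\MEMC$; since $\true \in \CloneL = [B]$, the constant is already expressible by a $B$-formula, which is all you need.
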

  \begin{proof}
    The proof is similar to the proof of Lemma~\ref{lem:memc-to-exp}.
    Let $B$ be a finite set of Boolean functions such that $[B]=\CloneL$.
    Given $\Sigma \subseteq \allAutoepistemicFormulae(B)$ and $\varphi \in \allAutoepistemicFormulae(B)$, map 
    the pair $(\Sigma,\varphi)$ to $\Sigma' := \Sigma \cup \{L\varphi \xor p, Lp\}$, where $p$ is a fresh proposition.
    We claim that $\varphi$ is contained in any stable expansion of $\Sigma$ iff $\Sigma' \not\in \EXP^*(B)$.
    
    First suppose that there exists a stable expansion $\Delta$ of $\Sigma$ that does not contain $\varphi$.
    Let $\Lambda$ denote its kernel. Then, for the same arguments as above, 
    $\Lambda':=\Lambda \cup \{ \neg L\varphi, Lp\}$ is a $\Sigma'$-full set.
    Conversely, suppose that $\varphi$ is contained in all stable expansions $\Delta$ of $\Sigma$.
    Let $\Delta'$ denote a consistent stable  expansion  of $\Sigma'$.
    If $Lp \in \Delta'$, then $\Delta'$ would also have to contain $\neg L\varphi$,
    while $\varphi$ can be derived. A contradiction to $\Delta'$ being a stable expansion of $\Sigma'$.
    Otherwise, if $\neg Lp \in \Delta'$, then $\Sigma' \cup L(\Delta') \cup \neg L(\bar{\Delta'})$ is 
    inconsistent---contradictory to $\Delta'$ being a consistent stable expansion.
    We conclude that $\Sigma'$ does not posses any consistent stable expansion.
  \end{proof}  
  
  \begin{proof}[{Proof of Theorem~\ref{thm:brave_cautious}.}]
    According to Lemma \ref{lem:always-constants} one can suppose w.l.o.g. that $B$ contains the two constants $0$ and $1$. Since $1$ belongs to all stable expansion, a set $\Sigma$ of $B$-premises has a stable expansion iff $1$ belongs to some  stable expansion  of $\Sigma$. Since $0$ does not belong to any consistent stable expansion,
    a set $\Sigma$ of $B$-premises has no consistent stable expansion iff $0$ belongs to any stable expansion  of $\Sigma$. 
    Therefore, the lower bounds follow from Theorem \ref{thm:ext} and Corollary~\ref{cor:consistent-exp}. 
    
    
    As for the upper bounds,
    membership in $\SigmaPtwo$ and $\PiPtwo$ in the general case follows from the discussion preceding Theorem~\ref{thm:ext}.
    
    For $[B]\subseteq \CloneV$, the proof of Proposition~\ref{prop:exp-V} shows 
    that, given $\Sigma \subseteq \allAutoepistemicFormulae(B)$, we can compute a $\Sigma$-full set $\Lambda$ in $\NP$ resp.\ $\co\NP$.
    By Lemma~\ref{lem:eqv_membership_SE}, it remains to check whether $\Sigma \cup \Lambda \models_L \varphi$.
    To this end, we nondeterministically guess a set $T \subseteq \SF^q(\varphi)$,
    verify that $\Sigma \cup \Lambda \cup \{L\chi \mid \chi \in T \} \cup \{\neg L \chi \mid \chi \in \SF^q(\varphi) \setminus T\} \models \varphi$, and recursively check that 
    \begin{itemize}
      \item $\Sigma \cup \Lambda \models_L \chi$ for all $\chi \in T$,
      \item $\Sigma \cup \Lambda \not\models_L \chi$ for all $\chi \in \SF^q(\varphi) \setminus T$.
    \end{itemize}
    This recursion terminates after at most $|\varphi|$ steps as $|\SF^q(\varphi)| \leq \SF(\varphi) \leq |\varphi|$ and
    $\Sigma \cup \Lambda \models_L \chi$ iff $\Sigma \cup \Lambda \models \chi$ for all for all propositional formulae $\chi$. 
    The above hence constitutes a polynomial-time Turing reduction to the implication problem for propositional $B$-formulae.
    As implication testing for $B$-formulae is in $\P$, we obtain that $\Sigma \cup \Lambda \models_L \varphi$ is polynomial-time decidable;
    thence $\MEMB(B) \in \NP$ and $\MEMC(B) \in \co\NP$.

    For $[B] \subseteq \CloneN$ and $[B] \subseteq \CloneE$,
    the proof of Proposition~\ref{prop:exp-EN} 
    shows that, given $\Sigma \subseteq \allAutoepistemicFormulae(B)$, 
    computation of a $\Sigma$-full set $\Lambda$ can be performed in $\L$, 
    while deciding $\Sigma \cup \Lambda \models_L \varphi$ reduces to testing whether 
    $\Sigma \cup \Lambda \models \psi$ for the (unique) atomic subformula $\psi \in \SF(\varphi)$.
    
    Finally, for $[B\cup\{0,1\}] = \CloneL$, the claim follows from Lemmas~\ref{lem:memc-to-exp} and \ref{lem:memb-to-exp}, 
Proposition \ref{prop:exp-L} and Corollary \ref{cor:consistent-exp}.
  \end{proof}

\section{Counting Complexity} \label{sect:counting}

  Besides deciding existence of stable expansions or entailment of formulae, 
  another natural question is concerned with the total number of stable expansions of a given autoepistemic theory. 
  We define the counting problem for stable extensions as
  \problemdef{$\#\EXP(B)$}
    {A set $\Sigma \subseteq \allAutoepistemicFormulae(B)$}
    {The number of stable expansions of $\Sigma$.}  
  The complexity of this problem is classified by the following theorem.
    
  \begin{theorem} \label{thm:count}
    Let $B$ be a finite set of Boolean functions.
    \begin{itemize}
      \item  If $[B\cup\{0,1\}]$ is $\CloneBF$ or $\CloneM$ then $\#\EXP(B)$ is $\SharpCoNP$-complete.
      \item If $[B\cup\{0,1\}]$ is $\CloneV$ then $\#\EXP(B)$ is  $\SharpP$-complete.
      \item If $[B\cup\{0,1\}] \subseteq \CloneL$ or $[B \cup \{0,1\}] \subseteq \CloneE$ then $\#\EXP(B)$ is in $\FP$.   
    \end{itemize}
  \end{theorem}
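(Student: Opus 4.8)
The plan is to establish the counting analogue of Theorem~\ref{thm:ext} by reusing, wherever possible, the reductions already constructed for the decision problems, but upgrading them to \emph{parsimonious} reductions so that the \emph{number} of witnesses is preserved. The crucial bookkeeping device throughout is Lemma~\ref{lem:correspondence-exp-fullset}(2): stable expansions are in one-to-one correspondence with $\Sigma$-full sets, so it suffices to count $\Sigma$-full sets. I would treat the three cases in decreasing order of complexity.

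For the first case ($[B\cup\{0,1\}]$ equals $\CloneBF$ or $\CloneM$), the natural guess is that $\#\EXP(B)$ is $\SharpCoNP$-complete because verifying that a candidate $\Lambda$ is $\Sigma$-full (Definition~\ref{def:full-set}) amounts to polynomially many implication checks, each of which is a $\coNP$ test when $B$ is unrestricted; hence the witness predicate ``$\Lambda$ is $\Sigma$-full'' lies in $\coNP$, placing $\#\EXP(B)$ in $\#\mathord\cdot\coNP=\SharpCoNP$ by the definition of the counting class. For hardness I would revisit Gottlob's $\SigmaPtwo$-hardness reduction, as adapted in Proposition~\ref{prop:exp-M} to monotone connectives, and argue that the map $\varphi\mapsto\Sigma'$ is in fact parsimonious: the argument there already shows that each $\Sigma'$-full set picks, for each $i$, exactly one of $Lx_i,Lx_i'$, so that full sets correspond bijectively to satisfying assignments $\sigma$ of the outer $\exists$-block that make $\forall\vec y\,\psi$ true. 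This should match the canonical $\SharpCoNP$-complete problem of counting such assignments.

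For the second case ($[B\cup\{0,1\}]=\CloneV$) I would mirror the $\NP$-membership argument of Proposition~\ref{prop:exp-V}: there, checking $\Sigma$-fullness reduces to implication testing for $\CloneV$-formulae, which is in $\P$ by \cite{bemethvo08imp}, so the witness predicate is in $\P$ and $\#\EXP(B)\in\#\mathord\cdot\P=\SharpP$. For $\SharpP$-hardness I would show that the $\ThreeSAT$ reduction of Proposition~\ref{prop:exp-V} is parsimonious. Again the key point is the ``exactly one of $x_i,x_i'$'' phenomenon: each stable expansion fixes a truth value for every $x_i$, and the explicit $\Sigma$-full set constructed from a satisfying assignment $\sigma$ shows the correspondence is a bijection between satisfying assignments of $\varphi$ and consistent stable expansions of $\Sigma$ (and $\Sigma$ admits no inconsistent expansion). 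Since $\#\ThreeSAT$ is $\SharpP$-complete under parsimonious reductions, this gives hardness.

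For the tractable case ($[B\cup\{0,1\}]\subseteq\CloneL$ or $\subseteq\CloneE$), I would invoke the constructive content of the earlier proofs directly. For $\CloneE$, Proposition~\ref{prop:exp-EN} shows there is \emph{at most one} consistent stable expansion, and whether the inconsistent expansion $\allAutoepistemicFormulae$ arises is decidable in logspace via Lemma~\ref{lem:inconsistent-expansion}; hence the count is always $0$, $1$, or $2$ and is computable in $\FP$. For $\CloneL$, the remark immediately following Lemma~\ref{lem:exp-L-atomic} is decisive: after the full-set-preserving transformation of Proposition~\ref{prop:exp-L} to an $L$-atomic system, the number of consistent stable expansions equals the number of solutions of the linear system $T'[x_{s+1}/Lx_{s+1},\ldots,x_n/Lx_n]$, which Gaussian elimination counts exactly (it is $2^{d}$ for $d$ free variables, or $0$), while the single inconsistent expansion is detected separately; both steps run in polynomial time. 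The main obstacle I anticipate is not any single computation but the verification that the decision-problem reductions are genuinely \emph{parsimonious} rather than merely many-one — in particular confirming that no spurious $\Sigma$-full sets appear and that the inconsistent expansion $\allAutoepistemicFormulae$ never contributes an extra, uncounted solution in the hardness constructions; this is exactly where the ``exactly one of $x_i,x_i'$'' analysis and the consistency checks from Corollary~\ref{cor:consistent-exp} must be marshalled carefully.
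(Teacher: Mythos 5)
Your overall strategy coincides with the paper's: parsimonious lower bounds obtained by re-examining the reductions of Lemma~\ref{lem:always-constants} and Propositions~\ref{prop:exp-M} and~\ref{prop:exp-V} (the paper names the source problems explicitly: $\#\Pi_{1}\SAT$, which is $\SharpCoNP$-complete under parsimonious reductions~\cite{dhk05}, for the monotone case, and $\#3\SAT$~\cite{val79} for $\CloneV$), and $\FP$ upper bounds for $\CloneL$ and $\CloneE$ via the Gaussian-elimination count following Lemma~\ref{lem:exp-L-atomic}, the uniqueness statement of Proposition~\ref{prop:exp-EN}, and a separate test for the inconsistent expansion via Lemma~\ref{lem:inconsistent-expansion}. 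Your handling of $\CloneE$ through Proposition~\ref{prop:exp-EN} is in fact slightly more direct than the paper's rewriting of $\Sigma$ over $\CloneI$, and your attention to whether $\allAutoepistemicFormulae$ contributes a spurious expansion in the hardness constructions is exactly the right concern.

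There is, however, one genuine flaw: your claim that the witness predicate ``$\Lambda$ is $\Sigma$-full'' lies in $\co\NP$ because fullness ``amounts to polynomially many implication checks.'' It does not. Definition~\ref{def:full-set} demands $\Sigma\cup\Lambda\models\varphi$ for every $L\varphi\in\Lambda$ (a $\co\NP$ test) \emph{and} $\Sigma\cup\Lambda\not\models\varphi$ for every $\neg L\varphi\in\Lambda$ (an $\NP$ test). The conjunction is a $\P^{\NP}$-type predicate, and the negative conditions alone are $\NP$-hard for $[B]=\CloneBF$ (satisfiability reduces to a single non-implication check), so the fullness predicate cannot lie in $\co\NP$ unless $\NP=\co\NP$. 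Consequently, membership in $\#\mathord{\cdot}\co\NP$ does not follow ``by the definition of the counting class.'' The paper closes precisely this gap: it places $\#\EXP(B)$ in $\#\mathord{\cdot}\P^{\NP}$ --- one computation path per candidate $\Lambda^{+}\subseteq\SF^L(\Sigma)$, with fullness verified using an $\NP$ oracle --- and then invokes the nontrivial identity $\#\mathord{\cdot}\P^{\NP}=\SharpCoNP$ from~\cite{hevo95}. Alternatively, your argument can be repaired by padding each witness with, say, the lexicographically least countermodel for every $\varphi$ with $\neg L\varphi\in\Lambda$, which makes verification a $\co\NP$ predicate without changing the number of witnesses; but some such device, or the cited collapse $\#\mathord{\cdot}\P^{\NP}=\SharpCoNP$, is indispensable and missing from your write-up.
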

  
  \begin{proof}
    We first prove the lower bounds.
    It is easily observed that the reduction given in the proof of Lemma~\ref{lem:always-constants} is parsimonious. For the claimed lower bounds it hence suffices to prove the $\SharpCoNP$-hardness of $\#\EXP(B)$ for $[B]=\CloneM$ and 
    the $\SharpCoNP$-hardness of $\#\EXP(B)$ for $[B]=\CloneV$.
    For the former, notice that the reduction given in Proposition~\ref{prop:exp-V} is also a parsimonious reduction from $\#3\SAT$, which is  $\SharpP$-complete-complete via parsimonious reductions~\cite{val79}. For the latter, notice that the proof of Proposition~\ref{prop:exp-M} establishes a parsimonious reduction from the problem $\#\Pi_{1}\SAT$, which is $\SharpCoNP$-complete via parsimonious reductions \cite{dhk05}.

    We are thus left to prove the upper bounds.
    Let $B$ be a finite set of Boolean functions such that $[B]=\CloneBF$. 
    In the paragraph starting Section~\ref{sect:complexity_results}, it has been argued that the problem of deciding $\EXP(B)$ nondeterministically Turing-reduces to the propositional implication problem (see also~\cite{nie91}): given $\Sigma \subseteq \allAutoepistemicFormulae(B)$, guess a subset $\Lambda^+ \subseteq \SF^L(\Sigma)$ and verify that $\Lambda:=\Lambda^+ \cup \{\neg L\varphi \mid \varphi \in \SF^L(\Sigma), L\varphi \notin \Lambda^+\}$ is a $\Sigma$-full set using the conditions given in Definition~\ref{def:full-set}.
    It is thus clear that $\#\EXP$ is contained in $\#\mathord{\cdot}\P^{\NP}$, 
    as a Turing machine implementing the above algorithm can be build in a way 
    such that there is a bijection between its computation paths and the possible sets $\Lambda^+$.
    The first claim now follows from $\#\mathord{\cdot}\P^{\NP}=\SharpCoNP$~\cite{hevo95}.
    
    Next, let $B$ be such that $[B]=\CloneV$. 
    Then there exists a nondeterministic Turing machine $M$ such that the number of accepting path of $M$ on input $\Sigma \subseteq \allAutoepistemicFormulae(B)$ corresponds to the number of stable expansions of $\Sigma$ (cf.\ the proof of Proposition~\ref{prop:exp-V}).
    Hence, $\#\EXP(B) \in \SharpP$.
    
    Next, suppose that $[B] \subseteq \CloneL$ and let $\Sigma$ denote the given autoepistemic theory.
    Let $T'$ denote the system of linear equations obtained from $\Sigma$ 
    in the proofs of Lemma~\ref{lem:exp-L-atomic}.
    Then the number of consistent stable expansions of $\Sigma$ is equal to the number of solutions of the system 
    $T'[x_{s+1}/Lx_{s+1}, \ldots , x_{n}/Lx_{n}]$, which can be computed in polynomial time by Gaussian elimination.
    Moreover, $\allAutoepistemicFormulae$ is a stable expansion of $\Sigma$ iff $\Sigma \cup \SF^L(\Sigma)$ is inconsistent, which is 
    polynomial-time decidable. Hence, $\#\EXP(B) \in \FP$. 
    
    Finally, the case $[B] \subseteq \CloneE$ follows from the fact that for any $\Sigma \subseteq \allAutoepistemicFormulae(B)$ 
    an equivalent representation $\Sigma' \in \allAutoepistemicFormulae(\CloneI)$ can be computed efficiently.
  \end{proof}

\section{Conclusion} \label{sect:conclusion}
In this paper we followed the approach of Lewis to build formulae from a given finite set $B$ of allowed Boolean functions~\cite{lew79} and studied the complexity of the expansion existence, the brave (resp.\ cautious) reasoning problem, and the counting problem for stable expansions involving $B$-formulae.

We showed that for all sets of allowed Boolean functions, 
the computational complexity of the expansion existence and reasoning problems is divided into four presumably different levels 
(see Figure~\ref{fig:post's-lattice}): all three problems remain complete for classes of the second level of the polynomial 
hierarchy as long as the connectives $\land$ and $\lor$ can be expressed; if, otherwise, only disjunctions can be expressed 
the complexity drops to completeness for the first level of the polynomial hierarchy; in all remaining cases, the problems 
become tractable (either contained in $\L$ or contained in $\P$ and $\parityL$-hard).
We obtained a non-trivial polynomial-time upper bound for the case of not-unary affine functions. Note however that
the exact complexity of the problems in this case remains open.
This clone has also remained unclassified in a number of previous 
related works on different modal and non-monotonic logics \cite{bhss05b,thomas09}. 

As for the problem of counting the number of stable expansions, its computational complexity is trichotomic: 
$\SharpCoNP$-complete, $\SharpP$-complete, or contained in $\FP$. 
We think it is important to note that for our classification of counting problems the conceptually simple 
parsimonious reductions are sufficient, while for related classifications in the literature less restrictive 
(and more complicated) reductions such as subtractive or complementive reductions had to be used 
(see, \emph{e.g.}, \cite{dhk05,duhe08,bbcrsv09} and some of the results of \cite{hepi07}). Parsimonious reductions are not only the conceptually simplest ones
since they are direct analogues of the usual many-one reductions among languages.
They also form the strongest (strictest) type of reduction with a number of good properties, e.\,g.~all relevant counting classes are closed under parsimonious reductions (and not under the other mentioned types of reductions). 
Thus, one of the contributions of our paper is a natural counting problem complete in the class $\SharpCoNP$ 
under the simplest type of reductions. 

Future work, besides closing the gap for the clone $\CloneL$, should compare the classification obtained here to related classifications for  non-monotonic logics such as default logic \cite{bemethvo08} or circumscription \cite{thomas09}. It will be interesting to study if the embeddings between the three logics mentioned in the introduction obey the border between the clones.

\end{document}